\documentclass[acmtog,balance=false]{acmart}
\AtBeginDocument{%
  }

\usepackage{amsmath,amsthm,bm}
\usepackage{graphicx}
\usepackage{subcaption}
\usepackage{booktabs}
\usepackage{tabularx}
\usepackage{multirow}  
\usepackage{wrapfig}
\usepackage{xcolor}
\usepackage{float}  
\usepackage[most]{tcolorbox}
\usepackage{tikz}
\usepackage{algorithm}
\usepackage{algpseudocode}
\usepackage{pgfplots}
\pgfplotsset{compat=1.18}

\definecolor{figblue}{HTML}{72A8CF}
\definecolor{figorange}{HTML}{FFAD66}
\definecolor{figgreen}{HTML}{59A14F}
\definecolor{figpurple}{HTML}{B07AA1}
\definecolor{figbrown}{HTML}{9C755F}
\definecolor{figred}{HTML}{E15759}
\usepackage{siunitx}
\DeclareSIUnit[quantity-product = ]\percent{\char`\%}

\usetikzlibrary{positioning,calc,arrows.meta,fit,backgrounds}

\graphicspath{{figures/}}

\tcbset{%
  hlmath/.style 2 args={%
    enhanced, nobeforeafter, tcbox raise base, math upper,
    colback=#1!7, colframe=#1!55!black,
    coltitle=black,
    boxrule=0.5pt, arc=2.5pt,
    left=2pt, right=2pt, top=4pt, bottom=3pt,
    boxsep=1pt,
    equal height group=hlmathgrp,
    attach boxed title to top center={yshift=-3pt, yshifttext=-1pt},
    boxed title style={%
      enhanced, size=small, colback=#1!10, colframe=#1!55!black,
      boxrule=0.4pt, arc=1.5pt,
      left=1pt, right=1pt, top=-0.5pt, bottom=-0.5pt,
      width=15mm, halign=center,
    },
    title={\footnotesize\sffamily\color{black} #2},
  },
}
\newcommand{\hlmath}[3]{\tcbox[hlmath={#1}{#2}]{#3}}

\newtheorem{theorem}{Theorem}
\newtheorem{proposition}{Proposition}
\newcommand{\e}{\mathrm{e}}
\newcommand{\abs}[1]{\left|#1\right|}
\newcommand{\norm}[1]{\left\|#1\right\|}

\renewcommand{\j}{\mathrm{j}}
\newcommand{\ttf}[1]{\textbf{\textit{#1}}}

\setcopyright{cc}
\setcctype{by}
\copyrightyear{2026}
\acmYear{2026}
\acmJournal{TOG}
\acmVolume{45}
\acmNumber{6}
\acmArticle{199}
\acmMonth{12}
\acmDOI{10.1145/3842559}

\begin{document}

\title[Dirichlet Splatting: Differentiable Rendering for Wave-Based Inverse Problems]{Dirichlet Splatting:\texorpdfstring{\\}{ }
Differentiable Rendering for Wave-Based Inverse Problems}

\author{Xingyu Chen}
\orcid{0000-0002-1989-4935}
\affiliation{%
  \institution{University of California San Diego}
  \city{La Jolla}
  \state{California}
  \country{USA}
}
\email{xic063@ucsd.edu}

\author{Wuqiong Zhao}
\orcid{0000-0002-9550-7423}
\affiliation{%
  \institution{University of California San Diego}
  \city{La Jolla}
  \state{California}
  \country{USA}
}
\email{wqzhao@ucsd.edu}

\author{Xinyu Zhang}
\orcid{0000-0001-9688-8056}
\affiliation{%
  \institution{University of California San Diego}
  \city{La Jolla}
  \state{California}
  \country{USA}
}
\email{xyzhang@ucsd.edu}

\author{Tzu-Mao Li}
\orcid{0000-0001-5443-470X}
\affiliation{%
  \institution{University of California San Diego}
  \city{La Jolla}
  \state{California}
  \country{USA}
}
\email{tzli@ucsd.edu}

\begin{abstract}
We present \emph{Dirichlet Splatting}, a physically grounded
differentiable rendering method for inverse problems with wave-based
coherent sensors.
Unlike optical cameras, these sensors observe a scene in a
band-limited Fourier space whose continuous point response is
complex-valued and oscillatory.
Our key observation is that the point spread function of a reflector
in this Fourier space is the \emph{Dirichlet kernel}, not a Gaussian.
We instantiate each splat as a Dirichlet kernel in closed form, model
each primitive as a planar surfel rendered through a
\emph{surfel-aware Dirichlet splat} whose complex amplitude is set by
wave propagation physics, and introduce a sliding Frank--Wolfe
adaptation tailored to the rugged, non-convex landscape that the
Dirichlet oscillation induces.
On terahertz tomography, Dirichlet Splatting recovers reflector
geometry with an order of magnitude fewer parameters, faster
convergence, and higher accuracy than 3D Gaussian Splatting.
\end{abstract}

\begin{CCSXML}
<ccs2012>
   <concept>
       <concept_id>10010147.10010371.10010372</concept_id>
       <concept_desc>Computing methodologies~Rendering</concept_desc>
       <concept_significance>500</concept_significance>
       </concept>
</ccs2012>
\end{CCSXML}

\ccsdesc[500]{Computing methodologies~Rendering}
\ccsdesc[300]{Computing methodologies~Reconstruction}
\ccsdesc[300]{Computing methodologies~Computational photography}

\keywords{differentiable rendering, inverse rendering, coherent imaging,
Dirichlet kernel, terahertz imaging, radar, splatting}

\begin{teaserfigure}
\centering
\begin{tikzpicture}
  \node[anchor=south west, inner sep=0] (teaserimg)
    {\includegraphics[width=\linewidth]{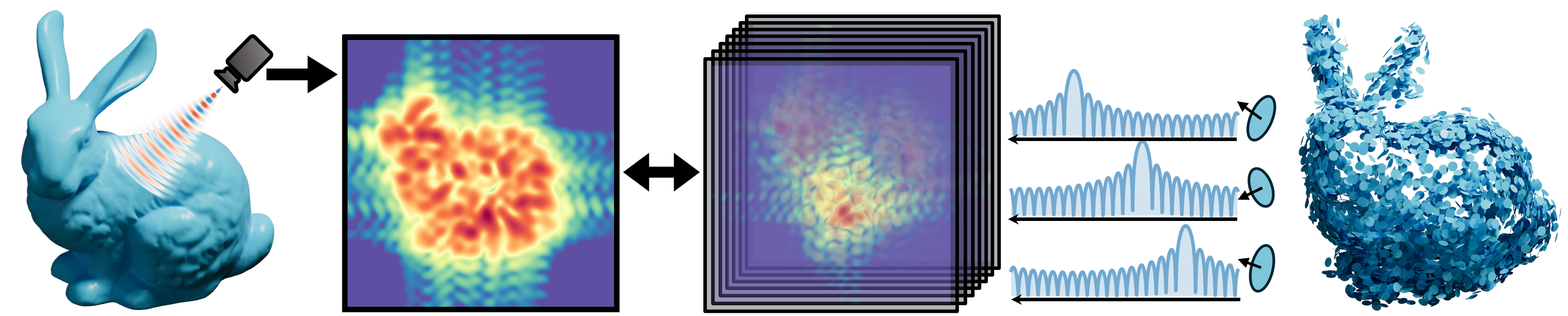}};
  \coordinate (dirtop) at ($(teaserimg.north west)!0.735!(teaserimg.north east)$);
  \node[anchor=south, font=\small\sffamily, inner sep=1pt, color=figred!70!black]
    at ([yshift=-2mm]dirtop) {Forward};
  \draw[-{Stealth[harpoon,swap,length=3mm,width=3mm]}, line width=0.7mm, color=figred]
    ([xshift=14mm,yshift=-4mm]dirtop) -- ([xshift=-14mm,yshift=-4mm]dirtop);
  \draw[-{Stealth[harpoon,swap,length=3mm,width=3mm]}, line width=0.7mm, color=figblue]
    ([xshift=-14mm,yshift=-5.8mm]dirtop) -- ([xshift=14mm,yshift=-5.8mm]dirtop);
  \node[anchor=north, font=\small\sffamily, inner sep=1pt, color=figblue!70!black]
    at ([yshift=-7.4mm]dirtop) {Backward};
  \begin{scope}[x={(teaserimg.south east)},y={(teaserimg.north west)}]
    \node[anchor=north, font=\large\sffamily]
      at (0.09, 0) {Physical World};
    \node[anchor=north, font=\large\sffamily]
      at (0.30, 0) {Spectrum};
    \node[anchor=north, font=\large\sffamily]
      at (0.54, 0) {Fourier Space};
    \node[anchor=north, font=\large\sffamily]
      at (0.73, 0) {Dirichlet};
    \node[anchor=north, font=\large\sffamily]
      at (0.91, 0) {Surfel Splats};
  \end{scope}
\end{tikzpicture}%
\caption{%
  \emph{Dirichlet Splatting} renders coherent wave-based measurements
  with a primitive whose spectral footprint is the physically exact
  Dirichlet kernel of a finite-window discrete Fourier transform.
  A coherent appearance model makes each surfel behave like a real
  surface element rather than an isotropic point reflector, while a
  tailored sliding Frank--Wolfe solver recovers reflector geometry
  from cold starts where Gaussian-splat baselines diverge.%
}%
\label{fig:teaser}
\Description{Teaser comparing Gaussian-splat and Dirichlet-splat
reconstructions of a generic coherent wave-based measurement, with an
inset showing Dirichlet vs Gaussian kernel structure on a dB scale.}
\end{teaserfigure}

\maketitle

\section{Introduction}
\label{sec:intro}

Differentiable rendering has become a foundational tool for inverse
problems in graphics and sensing.
By making the forward image-formation pipeline differentiable
end-to-end, it turns scene reconstruction into gradient-based
optimization, recovering geometry, materials, lighting, and motion
directly from sensor measurements~\cite{mildenhall2020nerf,kerbl20233dgs}.
Nevertheless, \emph{non-visible-spectrum} sensors, including
terahertz (THz) tomographs, sonars, and radar, equally demand
differentiable rendering to solve their inverse problems.

3D Gaussian Splatting (3DGS)~\cite{kerbl20233dgs} renders scenes via
differentiable rasterization of anisotropic Gaussian primitives.
Recent work extends 3DGS to wave-based sensors via phase-coherent
combination of splats~\cite{wen2025wrfgs,yang2025gsrf}.
However, this port carries a structural mismatch, as the Gaussian
primitive, designed to approximate optical point spread, cannot
represent the complex-valued, oscillatory point response that
coherent wave measurement produces.

Optical imaging interpolates a smooth radiance field through a
near-Gaussian aperture, making a Gaussian splat a faithful local
approximation of how light from a small region accumulates in a
pixel.
Wave-based coherent imaging is structurally different.
A sparse scene of reflectors drives an analytic, continuous
physical operator, and the sensor records its values on a discrete
band-limited grid only at the readout stage of acquisition.
The natural target of an inverse solver is therefore the
continuous physics, namely the positions, orientations, and
amplitudes of the reflectors that drive this operator, rather than
the post-processed grid it produces.

\newlength{\kcintextsep}\setlength{\kcintextsep}{\intextsep}
\setlength{\intextsep}{8pt plus 2pt minus 2pt}%
\newlength{\kcwidth}\setlength{\kcwidth}{0.38\columnwidth}
\newlength{\kcpull}\setlength{\kcpull}{\dimexpr\columnsep-6pt\relax}
\begin{wrapfigure}[6]{r}{\dimexpr\kcwidth-\kcpull\relax}%
\vspace{-0.25\baselineskip}%
\setlength{\linewidth}{\kcwidth}%
\noindent\hspace*{-\kcpull}\hspace*{-0.3mm}%
\definecolor{kernelwarm}{HTML}{FAF5EB}
\definecolor{kernelcool}{HTML}{EFF7F9}
\begin{tikzpicture}[
  every node/.style={inner sep=0pt, outer sep=0pt},
  rowlabel/.style={anchor=north west, font=\footnotesize\sffamily\bfseries},
  collabel/.style={anchor=south, font=\footnotesize\sffamily},
]
  \def\hgap{0.45mm}
  \def\vgap{1.0mm}
  \def\boxpad{0.28mm}
  \def\labelpad{0.38mm}
  \pgfmathsetlengthmacro{\panelh}{(\linewidth - \hgap - 2*\boxpad)/3}
  \pgfmathsetlengthmacro{\curvew}{2.0*\panelh}

  \node (g1d) at (0, 0)
    {\includegraphics[width=\curvew]{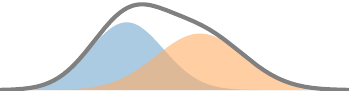}};
  \node[anchor=west] (g2d) at ([xshift=\hgap]g1d.east)
    {\includegraphics[height=\panelh]{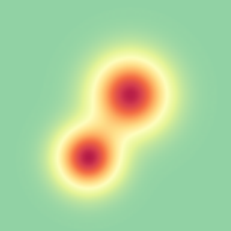}};

  \node[anchor=north west] (d2d) at ([yshift=-\vgap]g2d.south west)
    {\includegraphics[height=\panelh]{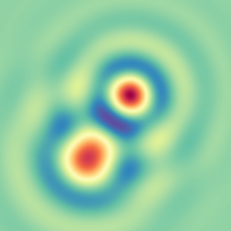}};
  \node (d1d) at ($(g1d |- d2d.center)+(0,-0.55mm)$)
    {\includegraphics[width=\curvew]{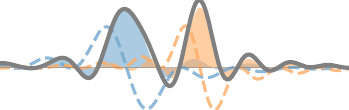}};

  \begin{scope}[on background layer]
    \node (gbox) [fit=(g1d)(g2d), fill=kernelwarm, rounded corners=2.2pt, inner sep=\boxpad] {};
    \node (dbox) [fit=(d1d)(d2d), fill=kernelcool, rounded corners=2.2pt, inner sep=\boxpad] {};
  \end{scope}

  \node[rowlabel] at ([xshift=\labelpad,yshift=-\labelpad]gbox.north west) {Gaussian};
  \node[rowlabel] at ([xshift=\labelpad,yshift=-\labelpad]dbox.north west) {Dirichlet};

\end{tikzpicture}%
\vspace{-0.7\baselineskip}
\end{wrapfigure}
\setlength{\intextsep}{\kcintextsep}
The point spread function (PSF) of this operator is complex-valued
and oscillatory (right inset), with sidelobes carrying analytic
information about the reflector geometry and a phase that governs
coherent interference between
reflectors~\cite{harris1978use}.
Existing ports of 3DGS approximate the operator by a smooth real
envelope on the discrete grid.
The approximation collapses both features and
reduces the inverse problem to a fit on a sampled output rather
than a differentiation of the underlying physics.

We present \emph{Dirichlet Splatting}, a 
differentiable rendering method for inverse problems with wave-based
coherent sensors.
\emph{First}, the PSF of a reflector in this
band-limited Fourier space is a \emph{Dirichlet
kernel}~\cite{harris1978use}.
We therefore instantiate each splat as a Dirichlet kernel evaluated
in $O(1)$ closed form.
\emph{Second}, we model each primitive as a planar \emph{surfel} and
render it through a \emph{surfel-aware Dirichlet splat} whose complex
amplitude is set by wave propagation physics, yielding a finite-area
surface element rather than an isotropic point reflector.
However, the very oscillation that makes the Dirichlet kernel
physically faithful induces a rugged, non-convex loss landscape.
Generic first-order optimizers such as Adam encounter spurious
gradient nulls in every sidelobe and stall before recovering even
simple multi-reflector scenes from cold starts.
\emph{Third}, we adapt \emph{Sliding Frank--Wolfe}~\cite{bredies2013inverse,denoyelle2020sliding}
to the Dirichlet landscape.
Our adaptation, \emph{Dirichlet Sliding Frank--Wolfe} (DSFW),
replaces the standard kernel-translation step with variable
projection over surfel parameters, residual dual certificates that
drive hard replacement of low-utility surfels, and periodic
low-resolution coupled Levenberg--Marquardt correction.

In summary, this paper makes the following contributions:

\ttf{(i)}
A \emph{Dirichlet splat primitive} whose spectral footprint is
  the exact point spread function in band-limited Fourier space
  (\S\ref{sec:method-kernel}).

\ttf{(ii)}
A \emph{surfel-aware coherent appearance model} that makes each
  splat behave as a real surface element rather than an isotropic
  point reflector (\S\ref{sec:method-surfel}).

\ttf{(iii)}
A sliding Frank--Wolfe adaptation tailored to the rugged
  Dirichlet loss landscape (\S\ref{sec:opt}).

\ttf{(iv)}
A real-data evaluation on sub-THz tomography, with ablations of the surfel
  modulation and the solver components, and convergence analysis (\S\ref{sec:results}).

\section{Related Work}
\label{sec:related}

\paragraph{Differentiable rendering and splatting primitives}
Differentiable rendering enables gradient-based optimization of scene
parameters from image observations, with NeRF-family volumetric
fields~\cite{mildenhall2020nerf,barron2022mipnerf360,mueller2022instant}
and differentiable rasterization~\cite{laine2020modular} as
representative families.
3DGS~\cite{kerbl20233dgs} introduced a real-time alternative based on projected 3D Gaussian primitives, since extended to dynamic scenes~\cite{luiten2024dynamic} and surface reconstruction~\cite{huang20242dgs}.
A common thread across these methods is that the choice of kernel
(i.e., Gaussian) is motivated by mathematical convenience rather than by the
physics of image formation. This is reasonable for optical cameras with
approximately Gaussian PSFs but structurally inappropriate for sensors
whose image formation is a finite-window discrete Fourier transform
(DFT).

In contrast, we preserve the splatting paradigm and its efficiency
while replacing the kernel with the physically exact Dirichlet
response of the underlying DFT.
A parallel line of work targets computer-generated holography (CGH)
by splatting complex-valued primitives directly onto a wavefront~\cite{choi2025gws,chao2025rpws,zhan2026cvhrf}.
These methods use complex-valued Gaussians to capture coherent
superposition (e.g., diffraction, interference) and demonstrate
state-of-the-art CGH quality.
However, equipping a Gaussian with a complex amplitude changes
\emph{what} the primitive carries, not the shape of the envelope
it splats. The envelope itself remains a smooth, monotonic Gaussian,
which cannot reproduce the periodic sidelobe nulls or the phase law
of a windowed DFT response (\S\ref{sec:motivation}).
The kernel mismatch is therefore structural rather than parametric,
and persists regardless of how rich the amplitude representation
becomes.
Our Dirichlet primitive instead replaces the envelope itself with
the analytic windowed-DFT response, embedding coherent sidelobe and
phase structure into the primitive rather than leaving them as
residuals for the optimizer to absorb.

\paragraph{Wave-optical rendering and appearance models}
A separate line of work makes the light transport itself wave-optical.
Generalized ray formulations carry phase, coherence, and diffraction
through the rendering integral and reduce to geometric optics in the
incoherent limit~\cite{steinberg2021generic,steinberg2024generalized},
while wave-optical appearance models replace the geometric-optics BRDF
with a diffraction-aware response derived from surface
microgeometry~\cite{stam1999diffraction,holzschuch2017two}.
These methods target passive imaging, where a scene is lit by ambient or
area sources and the sensor integrates a partially coherent field over a
finite aperture.
We instead target active coherent sensing, where the sensor illuminates
the scene with a known waveform and records the complex baseband return,
so that the measurement is a finite-window DFT whose point response is
the Dirichlet kernel in closed form (\S\ref{sec:motivation}).
A wave-optical renderer describes how a field arrives at an aperture,
while our appearance model describes what
a surfel contributes to a Fourier-space bin after that field has been
mixed, sampled, and transformed.

\paragraph{Differentiable coherent sensing}
Simulation of coherent wave-based sensors has traditionally relied on
ray-tracing methods that propagate electromagnetic or acoustic waves
geometrically and accumulate path losses, reflections, and
scattering~\cite{scheiner2020seeing,hirsenkorn2017virtual,auer2016ray,chen2025radio}.
Such simulators produce realistic time-domain signals but are
computationally expensive and difficult to differentiate.
Learning-based radar synthesis instead trains generative
models~\cite{wheeler2017deep,ouaknine2021carrada,chen2023rf} to bypass physical
simulation entirely, at the cost of physical grounding and gradient-based
inversion.
AutoMS and FlowForm use differentiable RF simulation to reshape mmWave coverage by
optimizing passive metasurfaces~\cite{ma2024automs,zhao2026flowform}.
A more direct route is to differentiate through the full waveform
pipeline (e.g., signal generation, mixing, sampling, fast Fourier
transform (FFT)) with automatic differentiation~\cite{chen2026rfdt}. This is physically
exact but materializes the full time-domain signal and its FFT
gradient at $O(MN\log MN)$ per step.
Recent neural-scene approaches such as DART~\cite{huang2024dart} and
Radar Fields~\cite{borts2024radarfields} adapt NeRF-style
representations to radar by combining implicit geometry with an
explicit or learned sensor model, but they either operate on
detection-level data (i.e., discarding spectral information) or approximate
the radar forward model with a neural network rather than its analytic
DFT response.
We instead render directly in the spectral domain using closed-form
Dirichlet responses, encoding the known DFT physics into the
rendering kernel itself rather than learning it from data.

\paragraph{Sparse recovery and variable projection}
That the DFT of a finite-length sinusoid is a Dirichlet
kernel~\cite{harris1978use,oppenheim1999discrete} and the related
windowed-FFT super-resolution practice~\cite{harris1978use} provide
the building blocks for our renderer. Off-the-grid sparse recovery
contributes sliding Frank--Wolfe with variable
projection~\cite{denoyelle2020sliding,jaggi2013revisiting,golub1973variable}
and subspace methods such as MUSIC~\cite{schmidt1986multiple} or
ESPRIT~\cite{roy1989esprit} bypass DFT resolution by exploiting
signal-subspace structure but do not provide gradients to a renderer.
Our optimizer combines sliding Frank--Wolfe, variable projection, and
residual dual certificates inside a differentiable splatting pipeline,
a combination that, to the best of our knowledge, has not previously
been applied to differentiable rendering.

\section{Background and Motivation}
\label{sec:motivation}

Wave-based coherent sensors form images by transmitting a known
waveform, mixing the received echoes against it, and applying a
finite-length DFT along each of $L$ measurement axes to assemble a
complex spectral tensor (Appendix~\ref{supp:theory}).
The axis count $L$ is modality-dependent (e.g., $L=1$ for coherent
spectroscopy, $L=2$ for synthetic-aperture imaging, $L=3$ for
range/Doppler/azimuth sensing).
The PSF-shaping step in the FFT-based imaging pipeline we target is a
finite-window DFT.
Each spectral bin records a complex amplitude, and contributions from
nearby reflectors interfere through their phase before any intensity
is observed.
Fig.~\ref{fig:physical-setup} sketches the source (a coherent
bistatic array), the targets (planar surfels with normal and
velocity), and the resulting Dirichlet PSF in the spectral tensor.

\input{figures/physical_setup}

A single point reflector at continuous bin offset $\Delta$ from a
spectral bin center contributes the exact analytic
\emph{Dirichlet kernel}
\begin{equation}
  \kappa_N(\Delta)
  = \e^{-\j\pi\frac{N-1}{N}\Delta}\,
    \frac{\sin(\pi\Delta)}{\sin(\pi\Delta/N)},
  \label{eq:dirichlet-kernel}
\end{equation}
where $N$ is the length of the DFT window applied by the sensor along
the corresponding measurement axis (i.e., the number of samples the
hardware coherently integrates before transforming).
Eq.~\eqref{eq:dirichlet-kernel} is the exact DFT response of a
windowed complex exponential, derived in
Appendix~\ref{supp:dirichlet-1d}.
The kernel is complex-valued, oscillatory, and periodic, with sidelobes
carrying \qty{\sim10}{\percent} of the total spectral energy for a
rectangular window~\cite{harris1978use} and a phase that governs
coherent interference between reflectors.
The sidelobe ratio and phase law are fixed by the window length $N$
and require no fitting.

A Gaussian splat (even with a complex amplitude) decays monotonically
and has no zero crossings, reproducing neither the periodic sidelobe
nulls nor the phase law that govern coherent interference between
reflectors at sub-bin separation. The resulting modeling error
dominates any reconstruction of coherent data.
The reflector-domain sparsity of wave-based scenes (most volume is
empty, with specular returns concentrating energy on isolated
surfaces) keeps a system-rank-equal Dirichlet representation
tractable, and adopting $\kappa_N$ as the rendering primitive matches
the windowed-DFT measurement physics
exactly~\cite{cetin2001feature}.

\section{Overview}
\label{sec:overview}

Dirichlet Splatting performs differentiable scene reconstruction
from active coherent measurements (THz, acoustic, or mmWave).
A known illuminator emits a band-limited waveform, and the sensor
records the complex echo as an $L$-axis spectral tensor whose axes
are modality-specific (e.g., depth alone for tomographic sensing;
range, Doppler, and azimuth for transceiver-array sensing; frequency
alone for coherent spectroscopy).
The scene is modeled as oriented planar surfels that scatter the
incident waveform once before reaching the sensor. Unlike the
emissive primitives of standard 3D Gaussian Splatting (3DGS), our surfels
are reflectors whose appearance is governed by complex reflectivity,
bistatic foreshortening, and a material-dependent Fresnel response.
Each surfel contributes a 3D Dirichlet kernel, the exact point-spread
function of the windowed DFT along each axis, placed directly in the
spectral domain rather than in the primal image domain of standard
splatting.

The renderer's projection, surfel-aware splats, and coherent compositing (\S\ref{sec:method}) feed DSFW (\S\ref{sec:opt}), an optimizer built on variable projection and residual dual certificates.      %

\section{Dirichlet Splatting}
\label{sec:method}

We render a scene of $S$ planar surfels into a coherent $L$-axis
spectral tensor in three logical steps.
Each surfel carries a small parameter set $\Theta_i$
(\S\ref{sec:method-primitives}). The renderer projects it to a
continuous Fourier coordinate $\mu_{i,\ell}$ on each measurement axis
$\ell$ via a sensor-specific map (\S\ref{sec:method-spectral}),
evaluates a surfel-modulated Dirichlet kernel whose width is set by
the FFT length (\S\ref{sec:method-surfel}), and coherently sums
across all surfels (\S\ref{sec:method-compositing}).
More specifically, the full forward model is
\begin{equation}\label{eq:method-surfel}
\hat{Y}(\mathbf{f})
=
\sum_{i=1}^{S}
\underbrace{%
\alpha_i \cdot
\hlmath{figblue}{am\smash{p}litude}{%
  \smash{\sqrt{A_i}\, V_i\, F_i\, R_i^{-\gamma}}\vphantom{()}%
}
\cdot
\hlmath{figorange}{kernel}{%
  \textstyle\smash{\prod_{\ell=1}^{L} d_{N_\ell}\!\big(f_\ell - \mu_{i,\ell}\big)}\vphantom{()}%
}%
}_{\mathrm{splat}_i(\mathbf{f})}\,,
\end{equation}
a coherent sum of per-surfel splats $\mathrm{splat}_i(\mathbf{f})$, each with a
complex amplitude (set by surfel physics) modulated by a
Dirichlet point spread (fixed by the per-axis FFT lengths
$N_\ell$).
The product form across $\ell$ is the exact separable DFT response of
a single point reflector
(Appendix~\ref{supp:dirichlet-separable}).
We evaluate the spectral tensor at the continuous Fourier coordinate
$\mathbf{f} = (f_1,\ldots,f_L)$, where $\ell$ indexes the measurement
axes. The measured bins lie at integer coordinates of $\mathbf{f}$.
Each kernel factor uses the fractional offset
$f_\ell - \mu_{i,\ell}$ between a bin and a surfel's continuous center.
In the $L=3$ running example,
$\mathbf{f} = (f_R,f_D,f_A)$ for range, Doppler, and azimuth.
The quantities $\alpha_i, A_i, V_i, F_i, R_i, \gamma$ are respectively
the complex reflectivity, oriented-disk area, bistatic foreshortening,
Fresnel response, sensor-to-surfel range, and path-loss exponent of surfel $i$
(\S\ref{sec:method-primitives}, \S\ref{sec:method-surfel}). The
per-axis Fourier center $\mu_{i,\ell}$ is the projection of
$(\mathbf{p}_i, \mathbf{v}_i)$ onto axis $\ell$
(\S\ref{sec:method-spectral}).

\subsection{Primitive Parameterization}
\label{sec:method-primitives}

We represent the scene as a collection of planar surfels, following the
geometric primitive of 2DGS~\cite{huang20242dgs} but replacing its
appearance model in full (\S\ref{sec:method-surfel}).

Each of the $S$ surfels carries the parameter set
\begin{equation}
\Theta_i = \{\mathbf{p}_i \in \mathbb{R}^3,\;
\mathbf{n}_i \in \mathbb{S}^2,\;
A_i \in \mathbb{R}_{>0},\;
\alpha_i \in \mathbb{C},\;
\mathbf{v}_i \in \mathbb{R}^3,\;
\eta_i\},
\label{eq:method-theta}
\end{equation}
where $\mathbf{p}_i$ is the surfel center, $\mathbf{n}_i$ its outward normal,
$A_i$ the area of the oriented disk it represents,
$\alpha_i$ a complex reflectivity
$\mathbf{v}_i$ a per-surfel translational
velocity that drives Doppler-axis behavior
(\S\ref{sec:method-spectral}; $\mathbf{0}$ for static scenes), and
$\eta_i \in \mathbb{C}$ the wave impedance of the surfel material,
entering the Fresnel response (\S\ref{sec:method-surfel}).
The geometric triple $(\mathbf{p}_i,\mathbf{n}_i,A_i)$ describes a small
oriented disk. The appearance pair $(\alpha_i,\eta_i)$ describes the
scene's coherent reradiation, distinct from 3DGS's spherical-harmonic
(SH) colors.
$\alpha_i$ is a scalar complex coefficient that scales the whole splat
and carries the surfel's magnitude and phase.
$\eta_i$ enters only through the Fresnel factor
$F_i \equiv F(\mathbf{n}_i,\theta_{\mathrm{inc},i},\eta_i)$ of
Eq.~\eqref{eq:method-surfel}, which controls the angular variation of
that magnitude (Eq.~\eqref{eq:method-fresnel}).
A Dirichlet surfel carries $13$ parameters versus $58$ for a 3DGS primitive ($\sim\!4.5\times$ fewer; Tab.~\ref{tab:primitive-comparison}).

\begin{table}[t]
\centering
\caption{\textbf{Primitive parameter comparison.}
Counts are degrees of freedom: a unit quaternion is three, a unit normal
two, and the complex amplitude is solved in closed form
(\S\ref{sec:opt-varpro}). Most of the reduction comes from appearance:
a coherent sensor uses one complex amplitude, whereas a camera uses a
directional radiance field.}
\label{tab:primitive-comparison}
\footnotesize
\setlength{\tabcolsep}{2pt}
\begin{tabular}{l l c l c}
\toprule
\textbf{Attribute} & \multicolumn{2}{c}{\textbf{3D Gaussian Splat}} &
                     \multicolumn{2}{c}{\textbf{Dirichlet Surfel (ours)}} \\
\cmidrule(lr){2-3}\cmidrule(lr){4-5}
            & expression & \#par & expression & \#par \\
\midrule
Position    & $\mathbf{p}\in\mathbb{R}^3$                          & 3
            & $\mathbf{p}\in\mathbb{R}^3$                          & 3 \\
Orientation & unit quaternion $\mathbf{q}\in\mathbb{H}$             & 3
            & unit normal $\mathbf{n}\in\mathbb{S}^2$               & 2 \\
Extent      & scale $\mathbf{s}\in\mathbb{R}^3_{>0}$                & 3
            & area $A\in\mathbb{R}_{>0}$                            & 1 \\
Velocity    & ---                                                   & 0
            & $\mathbf{v}\in\mathbb{R}^3$                           & 3 \\
Appearance  & SH coefficients, deg.\ $L=3$                       & 48
            & complex amplitude $\alpha\in\mathbb{C}$               & 2 \\
Opacity/material
            & opacity $o\in[0,1]$                                   & 1
            & impedance $\eta\in\mathbb{C}$                         & 2 \\
Kernel shape
            & Gaussian (from $\mathbf{q},\mathbf{s}$)               & 0
            & fixed by the aperture                                 & 0 \\
Compositing & alpha blending                                        & ---
            & coherent additive sum                                 & --- \\
\midrule
\textbf{Total / primitive} & & \textbf{58} & & \textbf{13} \\
\bottomrule
\end{tabular}
\end{table}

\subsection{Surfel-Aware Dirichlet Splat}
\label{sec:method-surfel}

The per-surfel splat $\mathrm{splat}_i(\mathbf{f})$ in
Eq.~\eqref{eq:method-surfel} is the complex contribution of surfel $i$
to $\hat{Y}(\mathbf{f})$.
Surfel attributes enter only as a pre-multiplicative complex amplitude
on the kernel product, cleanly separating the sensor model (the kernel
$\kappa_{N_\ell}$, evaluated through the unit-peak form
$d_{N_\ell}=\kappa_{N_\ell}/N_\ell$) from the scene model
($\alpha_i, A_i, V_i, F_i$). The Fourier coordinate $\mu_{i,\ell}$ is
defined per axis in \S\ref{sec:method-spectral},
Eq.~\eqref{eq:method-mu}.

\subsubsection{Complex Dirichlet kernel}
\label{sec:method-kernel}%
Each surfel's energy spreads across the Fourier-domain bins through
the Dirichlet kernel of \S\ref{sec:motivation}, the exact point-spread
function of an $N$-point DFT.
We parameterize it in \emph{bin units}: for a surfel with continuous
center $\mu$ evaluated at integer bin $k$, the bin offset is
$\Delta = k - \mu$.
We adopt the analytic Dirichlet form of Eq.~\eqref{eq:dirichlet-kernel}
(\S\ref{sec:motivation}) as the per-measurement-axis kernel (one
factor per axis $\ell$). It admits an $O(1)$ closed form per sample,
with a Taylor expansion for numerical stability near the removable
singularities at $\Delta \in N\mathbb{Z}$ where the denominator
vanishes (Appendix~\ref{supp:dirichlet-1d}; analytic gradient and
its singularity handling in Appendix~\ref{supp:gradients-kernel}).
The splat product uses the unit-peak variant
$d_N(\Delta) = \kappa_N(\Delta)/N$, with $d_N(0)=1$.
For sensors that apply a tapering window before the DFT, $d_N$
generalizes to the corresponding window's frequency response
(Appendix~\ref{supp:dirichlet-windowed}).%

\subsubsection{Coherent appearance model}
Each surfel reradiates a complex amplitude with magnitude from cell area,
foreshortening, Fresnel reflectivity, and free-space spreading, and phase
from round-trip propagation. Surfels combine coherently at the sensor and
are initialized from a random or preprocessed point cloud.

\paragraph{Amplitude factors}
Cell area enters as $\sqrt{A_i}$ so that $|\mathrm{splat}_i|^2$
contributes detected power proportional to $A_i$.
Foreshortening $V_i \equiv V(\mathbf{n}_i;\mathrm{tx},\mathrm{rx}) \in [0,1]$ is the bistatic geometric attenuation
\begin{equation}\label{eq:method-foreshortening}
V_i
=
\max(\hat{\mathbf{r}}_{\mathrm{tx},i}\!\cdot\!\mathbf{n}_i,0)\,
\max(\hat{\mathbf{r}}_{\mathrm{rx},i}\!\cdot\!\mathbf{n}_i,0),
\end{equation}
where $\hat{\mathbf{r}}_{\mathrm{tx},i}, \hat{\mathbf{r}}_{\mathrm{rx},i}$
are the unit-vector directions from transmitter and receiver to
surfel $i$ (\S\ref{sec:method-spectral}).
This factor is continuous and sub-differentiable in $\mathbf{n}_i$,
vanishing whenever the surfel is back-facing. %
Fresnel reflectivity $F_i \equiv F(\mathbf{n}_i,\theta_{\mathrm{inc},i},\eta_i)$ is the closed-form
angle-dependent response at a boundary of wave impedance $\eta_i$.
\begingroup\ignorespaces
Let $\theta_{\mathrm{inc},i}$ be the angle between the incident
direction and $\mathbf{n}_i$, and let $\eta_0$ be the wave impedance of
the surrounding medium.
For transverse-electric polarization at a planar boundary,
\begin{equation}\label{eq:method-fresnel}
F(\mathbf{n}_i,\theta_{\mathrm{inc},i},\eta_i)
=
\frac{\eta_i\cos\theta_{\mathrm{inc},i} - \eta_0\cos\theta_{t,i}}
     {\eta_i\cos\theta_{\mathrm{inc},i} + \eta_0\cos\theta_{t,i}},
\end{equation}
where the transmitted angle $\theta_{t,i}$ follows from Snell's law,
which for non-magnetic media reads
$\sin\theta_{t,i}/\sin\theta_{\mathrm{inc},i} = \eta_i/\eta_0$.
Intuitively, $\eta_i$ fixes the impedance mismatch at the boundary and
therefore the reflected fraction, while the two cosines make that
fraction vary with incidence angle.
A perfect conductor is the limit $\abs{\eta_i}\to 0$, where $F\to -1$ at
every angle. This is the regime of the metal targets in
\S\ref{sec:results-app}.
The transverse-magnetic coefficient and the derivation are in
Appendix~\ref{supp:fresnel}.
Path-loss exponent $\gamma\in[1,2]$ accounts for one-way or two-way
free-space spreading.
\par\endgroup\ignorespaces

\subsection{Spectral Rendering}
\label{sec:method-spectral}

Eq.~\eqref{eq:method-surfel} leaves two components of the forward
model unspecified: \textit{(i)}~a projection $\mu_i$ that supplies each
surfel's Fourier coordinates along every axis, and \textit{(ii)}~a coherent
compositing rule that sums the per-surfel splats into the spectral
tensor.

\subsubsection{Projection}
The role of $\mu_{\ell,i}$ mirrors the affine projection Jacobian in
3DGS.
Where 3DGS pushes a 3D covariance into a 2D image-space covariance,
we push a surfel's 3D position and velocity into Fourier-axis
offsets, one formula per axis $\ell$.
We instantiate the three formulas for the $L=3$ running example
below, with $\ell\in\{R,D,A\}$ for range, Doppler, and azimuth
(Fig.~\ref{fig:physical-setup}).

Let $\mathbf{o}$ denote the sensor origin and $\mathbf{u}_x$ the array axis.
For each surfel~$i$, we compute range $R_i = \norm{\mathbf{p}_i - \mathbf{o}}$,
the line-of-sight direction
$\hat{\mathbf{r}}_i = (\mathbf{p}_i - \mathbf{o})/R_i$,
radial velocity
$v_{r,i} = \langle \mathbf{v}_i, \hat{\mathbf{r}}_i \rangle$,
and azimuth
$\sin\theta_i = \langle \hat{\mathbf{r}}_i, \mathbf{u}_x \rangle$.
The continuous Fourier coordinates are then
\begin{equation}
\mu_{R,i} = \frac{N_R}{F_s}\frac{2K}{c}\,R_i, \quad
\mu_{D,i} = \frac{2N_D T_c}{\lambda}\,v_{r,i}, \quad
\mu_{A,i} = N_A\,\frac{\delta_{\mathrm{el}}}{\lambda}\sin\theta_i.
\label{eq:method-mu}
\end{equation}
The three coefficients map \textit{(i)}~round-trip delay $2R_i/c$ to range
bins, \textit{(ii)}~inter-pulse phase advance $4\pi v_{r,i}T_c/\lambda$ over
$N_D$ pulses to Doppler bins, and \textit{(iii)}~inter-element steering phase
$2\pi\delta_{\mathrm{el}}\sin\theta_i/\lambda$ over $N_A$ elements to
azimuth bins~\cite{richards2014fundamentals}.
Here $N_R, N_D, N_A$ are the per-axis FFT lengths and $\lambda$ the
carrier wavelength. The remaining constants ($F_s, K, T_c,
\delta_{\mathrm{el}}$) are sensor-side acquisition parameters whose
mapping is detailed in
Appendix~\ref{supp:impl}.
\paragraph{Doppler effects}
\label{sec:method-doppler}%
The radial component $v_{r,i}$ enters Eq.~\eqref{eq:method-mu} as the
Doppler-axis offset, and the same backward pass returns closed-form
Jacobians in both $\mathbf{p}_i$ and $\mathbf{v}_i$
(Appendix~\ref{supp:gradients-kernel}),
without the temporal extension layer of 3DGS-style dynamic
representations~\cite{luiten2024dynamic,yang2024deformable}.
The construction is exact under the standard coherent-sensing regime
(stop-and-hop, constant radial velocity, negligible range walk,
Doppler unambiguity). Validity bounds are in
Appendix~\ref{supp:fmcw-validity}.%

\subsubsection{Coherent compositing}
\label{sec:method-compositing}%
The complex spectral tensor is Eq.~\eqref{eq:method-surfel} evaluated
at the integer bin grid. For the $L=3$ running example,
$\hat{Y}[p,k,q] = \sum_{i=1}^{S} \mathrm{splat}_i(p,k,q)$ and the
observable power image is $\hat{H}[p,k,q] = \abs{\hat{Y}[p,k,q]}^2$.
Coherent signals superpose as complex phasors before the detector
squares the result, matching the true physics and correctly modeling
the constructive and destructive interference between reflectors
separated by a few resolution cells.
Compositing is purely additive (i.e., no depth order, no opacity),
making the renderer purely data-parallel.
When inter-surfel phase is not modeled, the renderer optionally
composites magnitude-squared splats via the Fej\'er kernel
(Appendix~\ref{supp:dirichlet-properties}).%

\section{Optimization}
\label{sec:opt}

The Dirichlet kernel of \S\ref{sec:method-kernel} is faithful in the
forward direction but oscillatory.
Its sidelobes propagate into the loss surface, yielding a rugged,
non-convex landscape on which first-order optimizers stall.

We collect the surfel parameters of \S\ref{sec:method} into a center matrix
$C \in \mathbb{R}^{S \times 3}$ and a complex coefficient vector
$\mathbf{b} = (\alpha_1,\dots,\alpha_S)^{\!\top} \in \mathbb{C}^S$ that
absorbs the per-splat amplitude product of Eq.~\eqref{eq:method-surfel}.
Let $\Omega$ index the measured spectral bins (so
$|\Omega|=N_R N_D N_A$ in the 3D case).
With $A(C) \in \mathbb{C}^{|\Omega| \times S}$ the matrix whose $i$-th
column is the spectral footprint of surfel $i$ and
$\mathbf{y}\in\mathbb{C}^{|\Omega|}$ the measured spectral image, the
inverse problem is
\begin{equation}
\min_{C,\,\mathbf{b}}\;
\mathcal{L}(C,\mathbf{b}) =
\frac{1}{|\Omega|}\norm{A(C)\mathbf{b} - \mathbf{y}}_2^2
+ \lambda \norm{\mathbf{b}}_2^2.
\label{eq:opt-objective}
\end{equation}
For reporting and for generic first-order baselines we additionally use a
log-power Charbonnier loss that handles 
the 40--60\,dB dynamic range typical of
coherent measurements~\cite{richards2014fundamentals}:
\begin{equation}
\mathcal{L}_{\mathrm{img}}
=
\frac{1}{|\Omega|}\sum_{k\in\Omega}
\sqrt{\bigl(\log(\hat H_k + \epsilon) - \log(H_k^{\mathrm{obs}} + \epsilon)\bigr)^2 + \epsilon_c^2},
\label{eq:opt-img-loss}
\end{equation}
with $\hat H_k = |[A(C)\mathbf{b}]_k|^2 + \sigma^2$ the rendered
power ($\sigma^2$ the noise floor),
$H_k^{\mathrm{obs}} = |y_k|^2$ the
measured power, $\Omega$ the set of evaluated bins, and constants
$\epsilon=10^{-6}$, $\epsilon_c=10^{-3}$.
Our Dirichlet Sliding-Frank-Wolfe (DSFW) algorithm minimizes Eq.~\eqref{eq:opt-objective} (closed-form in
$\mathbf{b}$). Eq.~\eqref{eq:opt-img-loss} is the outer evaluation
criterion and the loss used for AdamW baselines.

\begin{figure}[t]
\centering
\resizebox{\linewidth}{!}{\sffamily%
\begin{tikzpicture}[
  every node/.style={inner sep=0pt, outer sep=0pt},
  font=\footnotesize,
  iconbox/.style={
    rounded corners=1.4pt,
    draw=figorange!75!black, thick,
    fill=figorange!10,
    minimum width=11mm, minimum height=11mm,
  },
  intcap/.style={
    font=\fontsize{7pt}{7pt}\selectfont, color=figorange!45!black,
    anchor=north west, inner sep=0pt,
  },
  intcapG/.style={
    font=\fontsize{7pt}{7pt}\selectfont, color=figgreen!45!black,
    anchor=north west, inner sep=0pt,
  },
  intcapP/.style={
    font=\fontsize{7pt}{7pt}\selectfont, color=figpurple!50!black,
    anchor=north west, inner sep=0pt,
  },
  paneltitle/.style={
    font=\fontsize{7pt}{7pt}\selectfont\sffamily\bfseries, color=black!72,
    anchor=base,
  },
  paneleq/.style={
    font=\fontsize{7pt}{7pt}\selectfont, color=black!80,
    anchor=base,
  },
  eqsign/.style={font=\fontsize{8pt}{8pt}\selectfont, color=black!70,
                 inner xsep=0pt},
]

\def\pcL{-23mm}
\def\pcR{ 23mm}
\def\prT{  0mm}
\def\prB{-22mm}
\def\phw{22.5mm}
\def\phh{10mm}

\def\cofA{-14mm}
\def\cofB{  0mm}
\def\cofC{ 14mm}
\def\opAB{-7mm}
\def\opBC{ 7mm}

\def\rowyR{ 1mm}
\def\eqR  {-7.2mm}
\def\titleR{ 8mm}
\def\titleX{  0mm}

\begin{scope}[on background layer]
  \fill[black!6, rounded corners=2.0pt]
    ($(\pcL,\prT)+(-\phw,-\phh)$) rectangle ($(\pcL,\prT)+(\phw,\phh)$);
  \fill[black!6, rounded corners=2.0pt]
    ($(\pcR,\prT)+(-\phw,-\phh)$) rectangle ($(\pcR,\prT)+(\phw,\phh)$);
  \fill[black!6, rounded corners=2.0pt]
    ($(\pcL,\prB)+(-\phw,-\phh)$) rectangle ($(\pcL,\prB)+(\phw,\phh)$);
  \fill[black!6, rounded corners=2.0pt]
    ($(\pcR,\prB)+(-\phw,-\phh)$) rectangle ($(\pcR,\prB)+(\phw,\phh)$);
\end{scope}

\node[paneltitle]
  at ($(\pcL,\prT)+(\titleX,\titleR)$) {Reduced Objective};

\node[iconbox] (yIcon) at ($(\pcL,\prT)+(\cofA,\rowyR)$) {%
  \tikz[scale=0.22, baseline=-0.7mm]{%
    \fill[figorange!28]
      plot[smooth, samples=28, domain=-2.2:2.2]
      ({\x},{1.45*exp(-(\x-0.45)*(\x-0.45)*1.6)
            +1.05*exp(-(\x+1.25)*(\x+1.25)*3.0)
            +0.30*exp(-(\x-1.6)*(\x-1.6)*4.0)})
      -- (2.2,0) -- (-2.2,0) -- cycle;
    \draw[figorange!75!black, thick]
      plot[smooth, samples=28, domain=-2.2:2.2]
      ({\x},{1.45*exp(-(\x-0.45)*(\x-0.45)*1.6)
            +1.05*exp(-(\x+1.25)*(\x+1.25)*3.0)
            +0.30*exp(-(\x-1.6)*(\x-1.6)*4.0)});
    \draw[black!40, thick] (-2.2,0) -- (2.2,0);
  }%
};
\node[intcap] at ([xshift=0.4mm,yshift=-0.3mm]yIcon.north west) {$\mathbf{y}$};

\node[eqsign] at ($(\pcL,\prT)+(\opAB,\rowyR)$) {$=$};

\node[iconbox] (AIcon) at ($(\pcL,\prT)+(\cofB,\rowyR)$) {%
  \tikz[scale=0.27, baseline=-0.5mm]{%
    \foreach \cIdx/\py in {0/-0.7, 1/0.1, 2/0.8}{%
      \pgfmathsetmacro{\cx}{-1.0 + \cIdx*1.0}
      \foreach \rIdx in {0,1,2,3,4,5,6,7}{%
        \pgfmathsetmacro{\ry}{-1.05 + \rIdx*0.30}
        \pgfmathtruncatemacro{\opc}{max(8, 90-80*abs(\ry-\py))}
        \fill[figorange!\opc]
          ({\cx-0.31},{\ry-0.11}) rectangle ({\cx+0.31},{\ry+0.11});
      }
    }
    \draw[figorange!65!black, line width=0.5pt]
      (-1.42, 1.20) -- (-1.55, 1.20) -- (-1.55, -1.20) -- (-1.42, -1.20);
    \draw[figorange!65!black, line width=0.5pt]
      ( 1.42, 1.20) -- ( 1.55, 1.20) -- ( 1.55, -1.20) -- ( 1.42, -1.20);
  }%
};
\node[intcap] at ([xshift=0.4mm,yshift=-0.3mm]AIcon.north west) {$A(C)$};

\node[eqsign] at ($(\pcL,\prT)+(\opBC,\rowyR)$) {$\cdot$};

\node[iconbox] (bIcon) at ($(\pcL,\prT)+(\cofC,\rowyR)$) {%
  \tikz[scale=0.22, baseline=-0.7mm]{%
    \draw[black!40, thick] (-2.0,0) -- (2.0,0);
    \foreach \x/\h in {-1.6/1.3, -0.8/0.6, 0.0/1.6, 0.8/0.9, 1.6/0.4}{%
      \draw[figorange!80!black, thick] (\x,0) -- (\x,\h);
      \fill[figorange!80!black] (\x,\h) circle (0.10);
    }
  }%
};
\node[intcap] at ([xshift=0.4mm,yshift=-0.3mm]bIcon.north west) {$\mathbf{b}^{\!\star}\!(C)$};

\node[paneleq] at ($(\pcL,\prT)+(0,\eqR)$)
  {$F(C)=\tfrac12\|\mathbf{y}-A\mathbf{b}^{\!\star}\|^2$};

\node[paneltitle]
  at ($(\pcR,\prT)+(\titleX,\titleR)$) {(a)\,\,Variable Projection};

\node[iconbox] (G) at ($(\pcR,\prT)+(\cofA,\rowyR)$) {%
  \tikz[scale=0.25, baseline=-0.5mm]{%
    \foreach \i in {0,1,2,3}{%
      \foreach \j in {0,1,2,3}{%
        \pgfmathtruncatemacro{\valp}{15+78*exp(-((\i-\j)*(\i-\j))/1.2)}
        \fill[figorange!\valp]
          ({-1.4+0.7*\i},{-1.4+0.7*\j}) rectangle ({-0.7+0.7*\i},{-0.7+0.7*\j});
      }
    }
    \draw[figorange!70!black, thick] (-1.4,-1.4) rectangle (1.4,1.4);
  }%
};
\node[intcap] at ([xshift=0.4mm,yshift=-0.3mm]G.north west) {$G\!+\!\lambda I$};

\node[eqsign] at ($(\pcR,\prT)+(\opAB,\rowyR)$) {$\cdot$};

\node[iconbox] (q) at ($(\pcR,\prT)+(\cofB,\rowyR)$) {%
  \tikz[scale=0.25, baseline=-0.5mm]{%
    \draw[black!40, thick] (-1.6,0) -- (1.6,0);
    \foreach \x/\h in {-1.2/0.9, -0.4/1.2, 0.4/0.5, 1.2/1.0}{%
      \draw[figorange!75!black, thick] (\x,0) -- (\x,\h);
      \fill[figorange!75!black] (\x,\h) circle (0.085);
    }
  }%
};
\node[intcap] at ([xshift=0.4mm,yshift=-0.3mm]q.north west) {$\mathbf{q}$};

\node[eqsign] at ($(\pcR,\prT)+(\opBC,\rowyR)$) {$=$};

\node[iconbox, fill=figorange!22] (bstar) at ($(\pcR,\prT)+(\cofC,\rowyR)$) {%
  \tikz[scale=0.25, baseline=-0.5mm]{%
    \draw[black!40, thick] (-1.6,0) -- (1.6,0);
    \foreach \x/\h in {-1.2/1.3, -0.4/0.6, 0.4/1.5, 1.2/0.85}{%
      \draw[figorange!85!black, thick] (\x,0) -- (\x,\h);
      \fill[figorange!85!black] (\x,\h) circle (0.10);
    }
  }%
};
\node[intcap] at ([xshift=0.4mm,yshift=-0.3mm]bstar.north west) {$\mathbf{b}^{\!\star}$};

\node[paneleq] at ($(\pcR,\prT)+(0,\eqR)$)
  {$\mathbf{b}^{\!\star}=(G+\lambda I)^{-1}\mathbf{q}$};

\node[paneltitle]
  at ($(\pcL,\prB)+(\titleX,\titleR)$) {(b)\,\,Residual Dual Certificate};

\node[iconbox] (resid) at ($(\pcL,\prB)+(\cofA,\rowyR)$) {%
  \tikz[scale=0.22, baseline=-0.5mm]{%
    \fill[figorange!22]
      plot[smooth, samples=28, domain=-2.2:2.2]
      ({\x},{0.55*sin(deg(2.6*\x))*exp(-\x*\x/6)
            +0.40*sin(deg(4.2*\x+0.9))*exp(-\x*\x/4)
            +0.10})
      -- (2.2,-1) -- (-2.2,-1) -- cycle;
    \draw[figorange!75!black, thick]
      plot[smooth, samples=28, domain=-2.2:2.2]
      ({\x},{0.55*sin(deg(2.6*\x))*exp(-\x*\x/6)
            +0.40*sin(deg(4.2*\x+0.9))*exp(-\x*\x/4)
            +0.10});
    \draw[black!40, thick] (-2.2,0) -- (2.2,0);
  }%
};
\node[intcap] at ([xshift=0.4mm,yshift=-0.3mm]resid.north west) {$\mathbf{r}$};

\node[eqsign] at ($(\pcL,\prB)+(\opAB,\rowyR)$) {$\langle\,\rangle$};

\node[iconbox] (atom) at ($(\pcL,\prB)+(\cofB,\rowyR)$) {%
  \tikz[scale=0.25, baseline=-0.5mm]{%
    \def\atomcurve{
      (-1.50, 0.00) (-1.40, 0.10) (-1.30, 0.17) (-1.22, 0.19)
      (-1.10, 0.12) (-1.00, 0.00) (-0.90, 0.15) (-0.80, 0.27)
      (-0.72, 0.31) (-0.60, 0.23) (-0.50, 0.00) (-0.40, 0.34)
      (-0.30, 0.73) (-0.20, 1.10) (-0.10, 1.36) ( 0.00, 1.45)
      ( 0.10, 1.36) ( 0.20, 1.10) ( 0.30, 0.73) ( 0.40, 0.34)
      ( 0.50, 0.00) ( 0.60, 0.23) ( 0.72, 0.31) ( 0.80, 0.27)
      ( 0.90, 0.15) ( 1.00, 0.00) ( 1.10, 0.12) ( 1.22, 0.19)
      ( 1.30, 0.17) ( 1.40, 0.10) ( 1.50, 0.00)
    }
    \fill[figorange!28] plot coordinates {\atomcurve}
      -- (1.5,0) -- (-1.5,0) -- cycle;
    \draw[figorange!75!black, thick] plot coordinates {\atomcurve};
    \draw[black!40, thick] (-1.5,0) -- (1.5,0);
  }%
};
\node[intcap] at ([xshift=0.4mm,yshift=-0.3mm]atom.north west) {$a(\mathbf{x})$};

\node[eqsign] at ($(\pcL,\prB)+(\opBC,\rowyR)$) {$\Rightarrow$};

\node[iconbox, fill=figgreen!12] (cert) at ($(\pcL,\prB)+(\cofC,\rowyR)$) {%
  \tikz[scale=0.23, baseline=-0.5mm]{%
    \fill[figgreen!22]
      plot[smooth, samples=32, domain=-2.0:2.0]
      ({\x},{0.45+0.30*sin(deg(3*\x))+0.20*sin(deg(5.2*\x+0.4))
            +1.45*exp(-(\x-0.85)*(\x-0.85)*9)})
      -- (2.0,0) -- (-2.0,0) -- cycle;
    \draw[figgreen!60!black, thick]
      plot[smooth, samples=32, domain=-2.0:2.0]
      ({\x},{0.45+0.30*sin(deg(3*\x))+0.20*sin(deg(5.2*\x+0.4))
            +1.45*exp(-(\x-0.85)*(\x-0.85)*9)});
    \fill[figred!85]
      (0.85,2.10) -- (1.00,1.90) -- (0.85,1.70) -- (0.70,1.90) -- cycle;
    \draw[figred!75!black, thick]
      (0.85,2.10) -- (1.00,1.90) -- (0.85,1.70) -- (0.70,1.90) -- cycle;
    \draw[black!40, thick] (-2.0,0) -- (2.0,0);
  }%
};
\node[intcapG] at ([xshift=0.4mm,yshift=-0.3mm]cert.north west) {$s(\mathbf{x})$};

\node[paneleq] at ($(\pcL,\prB)+(0,\eqR)$)
  {$s(\mathbf{x})=|\langle a,\mathbf{r}\rangle|/\|a\|$};

\node[paneltitle]
  at ($(\pcR,\prB)+(\titleX,\titleR)$) {(c)\,\,DSFW Outer Step};

\node[iconbox] (Cold) at ($(\pcR,\prB)+(\cofA,\rowyR)$) {%
  \tikz[scale=0.25, baseline=-0.5mm]{%
    \draw[black!12, thick] (-1.4,-1.4) grid[step=0.7] (1.4,1.4);
    \fill[figorange!85!black] (-0.9, 0.6) circle (0.13);
    \fill[figorange!85!black] ( 0.7,-0.4) circle (0.13);
    \fill[figorange!85!black] ( 1.0, 1.0) circle (0.10);
    \fill[figorange!85!black] (-0.4,-1.0) circle (0.12);
    \draw[figred!80!black, thick] (-0.4,-1.0) circle (0.30);
    \draw[figred!80!black, thick] (-0.65,-1.25) -- (-0.15,-0.75);
    \draw[figred!80!black, thick] (-0.65,-0.75) -- (-0.15,-1.25);
  }%
};
\node[intcap] at ([xshift=0.4mm,yshift=-0.3mm]Cold.north west) {$C^{(t)}$};

\node[eqsign] at ($(\pcR,\prB)+(\opAB,\rowyR)$) {$\to$};

\node[iconbox, fill=figgreen!12] (peak) at ($(\pcR,\prB)+(\cofB,\rowyR)$) {%
  \tikz[scale=0.25, baseline=-0.5mm]{%
    \foreach \r in {1.30, 1.00, 0.70, 0.40, 0.18}{%
      \pgfmathtruncatemacro{\opc}{14+30/(\r+0.2)}
      \fill[figgreen!\opc] (0.55,0.35) circle (\r);
    }
    \draw[figred!80!black, thick] (0.55,0.35) circle (0.20);
    \fill[figred!80!black] (0.55,0.35) circle (0.07);
    \draw[figred!80!black, thick]
      (0.55,0.35)+(-0.45,0) -- ++(-0.18,0);
    \draw[figred!80!black, thick]
      (0.55,0.35)+(0.18,0) -- ++(0.45,0);
    \draw[figred!80!black, thick]
      (0.55,0.35)+(0,-0.45) -- ++(0,-0.18);
    \draw[figred!80!black, thick]
      (0.55,0.35)+(0,0.18) -- ++(0,0.45);
  }%
};
\node[intcapG] at ([xshift=0.4mm,yshift=-0.3mm]peak.north west) {$\mathbf{x}^{\!\star}$};

\node[eqsign] at ($(\pcR,\prB)+(\opBC,\rowyR)$) {$\Rightarrow$};

\node[iconbox, fill=figpurple!10] (Cnew) at ($(\pcR,\prB)+(\cofC,\rowyR)$) {%
  \tikz[scale=0.25, baseline=-0.5mm]{%
    \draw[black!12, thick] (-1.4,-1.4) grid[step=0.7] (1.4,1.4);
    \fill[figorange!85!black] (-0.9, 0.6) circle (0.13);
    \fill[figorange!85!black] ( 0.7,-0.4) circle (0.13);
    \fill[figorange!85!black] ( 1.0, 1.0) circle (0.10);
    \fill[figgreen!60!black] (0.55, 0.35) circle (0.15);
    \draw[figgreen!60!black, thick] (0.55, 0.35) circle (0.28);
  }%
};
\node[intcapP] at ([xshift=0.4mm,yshift=-0.3mm]Cnew.north west) {$C^{(t+1)}$};

\node[paneleq] at ($(\pcR,\prB)+(0,\eqR)$)
  {$C^{(t+1)}\!\leftarrow\!\textsc{Replace}$};

\def\legY{-34mm}
\fill[figorange!85!black] (-24mm,\legY) circle (0.7mm);
\node[font=\footnotesize\sffamily, color=black!72,
      anchor=west, inner sep=0pt]
  at (-23mm,\legY) {current};
\fill[figgreen!60!black] ( -8mm,\legY) circle (0.7mm);
\node[font=\footnotesize\sffamily, color=black!72,
      anchor=west, inner sep=0pt]
  at ( -7mm,\legY) {new candidate};
\fill[figred!80!black]   ( 15mm,\legY) circle (0.7mm);
\node[font=\footnotesize\sffamily, color=black!72,
      anchor=west, inner sep=0pt]
  at ( 16mm,\legY) {marked};

\end{tikzpicture}%
}%
\caption{%
  \textbf{DSFW overview.}
  DSFW maintains a surfel set $C$ by iterating three steps over the
  reduced objective $F(C)$ (top-left).
  \textbf{(a)}~\emph{Variable Projection} eliminates the linear
  coefficients $\mathbf{b}^{\!\star}$ in closed form.
  \textbf{(b)}~The \emph{residual dual certificate} $s(\mathbf{x})$
  peaks where adding a splat most reduces the residual, bypassing
  sidelobe minima.
  \textbf{(c)}~The outer step \emph{hard-replaces} the lowest-utility
  surfel with $\arg\max_{\mathbf{x}} s(\mathbf{x})$.
  Icons depict the complex spectral tensors $\mathbf{y}$, $\mathbf{r}$
  as magnitude envelopes, $A(C)$ as a dictionary matrix with one
  Dirichlet splat per column, and $a(\mathbf{x})$ as the Dirichlet
  kernel of Fig.~\ref{fig:physical-setup}.%
}%
\label{fig:dsfw-overview}%
\Description{}
\end{figure}

\subsection{Optimization Challenges}
\label{sec:opt-challenges}
Three properties of $\mathcal{L}$ make Eq.~\eqref{eq:opt-objective}
hard for plain gradient descent.
\paragraph{Sidelobe minima}
A rectangular window gives $\kappa_N$ \qty{-13}{dB} sidelobes that
place a gradient null at every sidelobe peak. A center initialized
one main-lobe width from the true peak descends into the nearest
sidelobe and stalls, and plain AdamW from a cold start fails on the
benchmarks of \S\ref{sec:results-inverse} and \S\ref{sec:results-3d}.
\paragraph{Amplitude shrinkage}
Eq.~\eqref{eq:opt-objective} couples centers and amplitudes:
reducing $|\alpha_i|$ lowers the residual by trading positional
accuracy for coefficient magnitude, leaving small-amplitude splats
parked at sidelobe minima rather than at true scatterers.
\paragraph{Inter-atom coupling}
Splats within a kernel width interfere coherently, coupling their
center gradients. Adam's diagonal preconditioning misses the
off-diagonal blocks of
$\mathcal{H} = \mathrm{Re}(J_c^{\!\mathsf H} J_c)/|\Omega|$, so two
coupled splats cannot make the joint move that resolves their
interference.

DSFW addresses these three challenges in turn
(Fig.~\ref{fig:dsfw-overview}).
\emph{Variable projection} (\S\ref{sec:opt-varpro}) eliminates
$\mathbf{b}$ analytically and leaves only the centers $C$ to optimize,
removing amplitude shrinkage from the search.
The \emph{residual dual certificate} (\S\ref{sec:opt-cert}) globally
scores how much a new splat at each grid point would reduce the
residual, allowing the search to move beyond local sidelobe minima.
DSFW (\S\ref{sec:opt-dsfw}) combines both mechanisms by replacing the
lowest-utility splat with the certificate maximizer and periodically
taking a Gauss--Newton step to account for inter-splat coupling.

\subsection{Variable Projection}
\label{sec:opt-varpro}

Our forward model is linear in the splat amplitudes $\mathbf{b}$,
since surfels superpose coherently as a sum of complex amplitudes
(Eq.~\eqref{eq:method-surfel}).
This linearity enables both the closed-form coefficient solve
developed in this subsection and the SFW outer loop of
\S\ref{sec:opt-dsfw}.
3DGS lacks this property because alpha compositing renders the
image as a nonlinear function of the per-Gaussian opacity and
color~\cite{kerbl20233dgs}, which precludes a closed-form solve.
The obstruction lies in the compositing rule rather than the
kernel. Gaussian variants with a linear sum-of-splats forward model
(e.g., GSRF~\cite{yang2025gsrf}, WRF-GS+~\cite{wen2025wrfgs}) admit
the same machinery, which we evaluate as GS+SFW, the Gaussian cell of
the $2{\times}2$ kernel--optimizer ablation in
\S\ref{sec:results-ablations}.

For fixed centers $C$, Eq.~\eqref{eq:opt-objective} is a complex ridge
regression with closed-form minimizer
\begin{equation}\label{eq:opt-bstar}
\begin{aligned}
\mathbf{b}^{\!\star}(C) &= \big(G(C) + \lambda I\big)^{-1} q(C), \\
G(C) &= \tfrac{1}{|\Omega|}\,A(C)^{\!\mathsf H} A(C), \quad
q(C) = \tfrac{1}{|\Omega|}\,A(C)^{\!\mathsf H} \mathbf{y}.
\end{aligned}
\end{equation}
Substituting back yields the variable-projected reduced
objective~\cite{golub1973variable,oleary2013variable},
\begin{equation}\label{eq:opt-vp}
F(C) = \mathcal{L}\!\big(C,\mathbf{b}^{\!\star}(C)\big),
\end{equation}
which depends only on the center variables.
Eliminating $\mathbf{b}$ removes amplitude shrinkage as a degree of freedom and
yields a better-conditioned nonlinear search at each step.

\subsection{Residual Dual Certificates}
\label{sec:opt-cert}

With $\mathbf{b}^{\!\star}(C)$ fixed by Eq.~\eqref{eq:opt-bstar}, the
\emph{residual dual certificate}~\cite{denoyelle2020sliding,jaggi2013revisiting}
provides a global signal for where to move centers next:
\begin{equation}\label{eq:opt-cert}
s(\mathbf{x}) =
{\big|\big\langle a(\mathbf{x}),\;
\mathbf{y} - A(C)\mathbf{b}^{\!\star}(C)\big\rangle\big|}\,\big/\,
{\|a(\mathbf{x})\|},
\end{equation}
where $a(\mathbf{x})$ is the spectral footprint a hypothetical surfel placed at position $\mathbf{x}$ would produce.
Intuitively, peaks of $s$ identify locations where adding a splat
most reduces the residual. Equivalently, $s$ is the dual variable of
the off-the-grid sparse recovery problem of which
Eq.~\eqref{eq:opt-objective} is the primal.

\paragraph{Global signal vs.\ local gradient}
Unlike a local gradient, $s(\mathbf{x})$ is evaluated everywhere on a
candidate grid, and therefore bypasses the sidelobe minima of
\S\ref{sec:opt-challenges}.
Its cost is dominated by candidate-voxel correlations and grows quickly
with grid resolution.
A fused kernel reduces certificate evaluation from a per-experiment
to a per-step cost (Appendix~\ref{app:runtime}).%

\subsection{Dirichlet Sliding Frank--Wolfe}
\label{sec:opt-dsfw}

The certificate of \S\ref{sec:opt-cert} drives a global, sparse-recovery
update that bypasses the sidelobe landscape entirely.
DSFW is \emph{inspired by} the sliding Frank--Wolfe (SFW)
template~\cite{denoyelle2020sliding} for off-the-grid spike recovery
on the BLASSO problem, but deviates in one important way.
Classical SFW \emph{adds} a new splat at
$\arg\max_{\mathbf{x}} s(\mathbf{x})$ and then merges nearby splats,
growing the support monotonically.
DSFW instead operates at a fixed budget of $S$ splats and
\emph{replaces} the lowest-utility splat with the certificate's
argmax.
The fixed budget reflects the physics of coherent imaging.
Aperture and bandwidth bound the number of resolvable scatterers,
and unbounded growth would fit sidelobes and noise as spurious splats.
We set $S$ slightly above a back-projection sparsity estimate
(\S\ref{sec:results-3d}).
The trade-off is that the classical SFW guarantees~\cite{denoyelle2020sliding}
no longer apply, since they require both the BLASSO non-degenerate-source
condition and unconstrained splat addition.

Each DSFW outer step performs four operations.
(i)~Solve Eq.~\eqref{eq:opt-bstar} for $\mathbf{b}^{\!\star}(C)$.
(ii)~Evaluate the certificate Eq.~\eqref{eq:opt-cert} on a coarse
candidate grid.
(iii)~\emph{Hard-replace} the surfel of lowest utility with
$\arg\max_{\mathbf{x}} s(\mathbf{x})$, where utility ranks splats by
amplitude and by a coherent-cancellation efficiency that
down-weights splats already representable by their neighbors.
(iv)~Optionally refine with a Levenberg--Marquardt sub-step on the
spatial residual.
Algorithm~\ref{alg:dsfw} specifies the loop, schedules, and
optional steps.
\begingroup\ignorespaces

\paragraph{Utility score}
Step (iii) must select a splat for replacement. Amplitude alone
is insufficient because neighboring splats may reproduce a
large-amplitude splat without increasing the residual, whereas an
isolated small-amplitude splat may be indispensable. We therefore score
each surfel $j$ by
\begin{equation}\label{eq:opt-utility}
u_j
=
\tfrac12\,\frac{\abs{b_j^{\!\star}}}{\max_k \abs{b_k^{\!\star}}}
+
\tfrac12\,\mathrm{eff}_j,
\qquad
\mathrm{eff}_j
=
\frac{\norm{P_{A_{\setminus j}}^{\perp}\,\mathbf{a}_j}_2}
     {\norm{\mathbf{a}_j}_2}
\in [0,1],
\end{equation}
where $\mathbf{a}_j$ is the spectral footprint of surfel $j$ and
$P_{A_{\setminus j}}^{\perp}$ projects onto the orthogonal complement of
$\mathrm{span}\{\mathbf{a}_k\}_{k\neq j}$.
Intuitively, the first term measures relative amplitude. The
\emph{coherent-cancellation efficiency} measures how much of the
surfel's footprint lies outside the span of the other splats. A value of
$\mathrm{eff}_j$ near zero means that the other splats already span this
footprint. Replacing such a surfel changes the residual little and
reallocates its center to the certificate maximum.

\paragraph{Initialization}
Cold-start runs draw the $S$ centers uniformly at random from the
candidate volume, which is the box spanned by the sensor's range gate and
its azimuth field of view, and give every center the same initial
amplitude. Where a modality supplies a cheap signal-processing estimate,
the centers may instead be seeded from back-projection peaks or from
FFT-domain thresholding, as the multi-view fit of
\S\ref{sec:results-bunny} and the printed-object fit of
\S\ref{sec:results-thz} do. Per-experiment candidate grids and budgets
are listed in
Appendix~\ref{supp:impl-hyperparams}.%
\par\endgroup\ignorespaces

DSFW carries no convergence guarantee, but recovers ground-truth
geometry from random initialization on every benchmark in
\S\ref{sec:results-inverse}.

\begin{algorithm}[t]
\caption{DSFW outer loop.}
\label{alg:dsfw}
\begin{algorithmic}[1]
\Require measurement $\mathbf{y}$, initial centers $C_0$, kernel order
$N$, ridge weight $\lambda$, iteration budget $T$, low-resolution
schedule $\mathcal{T}_{\mathrm{lr}}$, candidate grid $\mathcal{X}$
\Ensure refined centers $C$, coefficients $\mathbf{b}^{\!\star}$
\State $C \gets C_0$
\For{$t = 1, \dots, T$}
  \State $\mathbf{b}^{\!\star} \gets
    \arg\min_{\mathbf{b}} \norm{\mathbf{y} - A(C)\mathbf{b}}^2
                          + \lambda\norm{\mathbf{b}}^2$
    \Comment{VarPro, Eq.~\eqref{eq:opt-bstar}}
  \State $s(\mathbf{x}) \gets$ certificate on $\mathcal{X}$ from residual
    $\mathbf{r} = \mathbf{y} - A(C)\mathbf{b}^{\!\star}$
    \Comment{Eq.~\eqref{eq:opt-cert}}
  \State $u_j \gets \tfrac12 \abs{b_j^{\!\star}}/\max_k\abs{b_k^{\!\star}}
                  + \tfrac12\,\mathrm{eff}_j$ for each $j$
    \Comment{Eq.~\eqref{eq:opt-utility}}
  \State $j^\dagger \gets \arg\min_j u_j$
  \State $\mathbf{c}_{j^\dagger}
    \gets \arg\max_{\mathbf{x} \in \mathcal{X}}\, s(\mathbf{x})$
    \Comment{hard replacement}
  \State \emph{(optional)}\; refine $C$ by a Levenberg--Marquardt
    sub-step on the spatial residual
  \If{$t \in \mathcal{T}_{\mathrm{lr}}$}
    \State propose $\Delta C$ from a coupled low-resolution LM step and
      accept it if the full-resolution loss decreases
  \EndIf
\EndFor
\State \Return $C$, $\mathbf{b}^{\!\star}$
\end{algorithmic}
\end{algorithm}

Furthermore, DSFW augments the outer step with a periodic Gauss--Newton correction
and four small stabilizers (window matching, kernel-continuation
curriculum, pairwise repulsion, coarse-to-fine schedule on $N$). The refinements and experiment settings are described in
Appendix~\ref{supp:impl}.
With these corrections, DSFW achieves the lowest final loss among the
optimizers we evaluate (\S\ref{sec:results-inverse}).

\section{Results and Application}
\label{sec:results}

We evaluate Dirichlet Splatting on controlled synthetic benchmarks
(\S\S\ref{sec:results-forward}--\ref{sec:results-ablations}) and on a real
hardware application (\S\ref{sec:results-app}).
Synthetic experiments use a DFT setup with $N=256$ range
bins, $M=64$ slow-time bins, and $A=8$ array elements unless noted.
Ground-truth signal synthesis and per-experiment protocols are in Appendix~\ref{app:extra}.
AdamW hyperparameters and DSFW defaults are
listed in Appendix~\ref{supp:impl-hyperparams}.

\subsection{Baselines}
\label{sec:results-setup}

We compare against published splatters and classical/physics
references. Internal ablations are deferred to
\S\ref{sec:results-ablations}.
\textbf{3DGS}~\cite{kerbl20233dgs} is the canonical Gaussian splatter;
\textbf{GSRF}~\cite{yang2025gsrf} and \textbf{WRF-GS+}~\cite{wen2025wrfgs}
are the state of the art in coherent Gaussian splatting for
radio-frequency imaging;
\textbf{back-projection} is the matched-filter coherent adjoint;
and \textbf{WaveformAD} differentiates through the full
synthesis$\to$FFT$\to|{\cdot}|^2$ pipeline with autograd, serving as a
physics oracle that bounds attainable accuracy.
All comparisons share scene parameters, learning rate, truncation
radius, splat budget, and loss function unless noted.
\S\ref{sec:results-inverse} fits targets rendered by the same
operator, which isolates the solver from measurement-model error.
\S\ref{sec:results-bunny} shares that operator but scores geometry
against a closed mesh. The other experiments score against independent
targets, namely the analytic
FFT of a point reflector (\S\ref{sec:results-forward}), raw range
profiles from a standard FMCW chain (\S\ref{sec:results-3d}), and
measured captures (\S\ref{sec:results-app}).

\subsection{Forward Model Validation}
\label{sec:results-forward}

The first experiment tests the physical claim of \S\ref{sec:method-kernel}.
The Dirichlet kernel \emph{is} the PSF of the finite-window
DFT. A Gaussian splat is at best a main-lobe approximation that discards the
sidelobe energy carrying coherent phase information.

\paragraph{Setup}
We place a single complex reflector at fractional sub-bin offsets
$\delta \in \{0.0, 0.1, \ldots, 0.5\}$ on a finite-window grid and compare
each method's rendered response against the corresponding FFT-domain
ground truth.
We sweep over rectangular, Hann, and Kaiser ($\beta=14$) windows to verify
that the windowed kernel form (Appendix~\ref{supp:dirichlet-windowed}) tracks
arbitrary tapers, and over 1D, 2D, and 3D grids to verify separability.
The experiment isolates the kernel choice cleanly, with no
optimization, no scene complexity, and only forward evaluation against
an exact analytic reference.

\begin{figure}[t]
\centering
\includegraphics[width=\columnwidth]{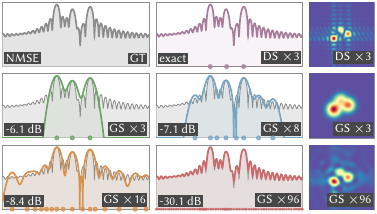}
\caption{%
  \textbf{Forward model fidelity.}
  Three Dirichlet primitives reproduce the sidelobe-rich coherent
  range profile of three complex reflectors exactly (top), whereas a
  fixed-width Gaussian representation still needs $96$ splats
  ($32{\times}$ more) to reach $-30.1$~dB magnitude normalized
  mean squared error (NMSE).
  Right: the corresponding 2D Dirichlet interference pattern in
  linear scale.
}%
\label{fig:forward-1d}%
\Description{Two rows of plots. The top row overlays a rendered range profile on an FFT ground truth in decibels for Dirichlet and Gaussian primitives at several splat counts. The bottom row plots normalized mean squared error against splat count. A right-hand panel shows a two-dimensional Dirichlet interference pattern in linear scale.}%
\end{figure}

\begin{figure*}[!tp]
\centering
\begin{tikzpicture}[
  , photo/.style  = {
    , inner sep = 0
    , outer sep = 0
    , anchor = south west
  }
  , mse label/.style = {
    , fill=black,fill opacity=0.62,text opacity=1,
        text=white,inner xsep=2pt,inner ysep=2pt
    , anchor = south west
    , outer sep = 0
    , font = \footnotesize\sffamily
  }
  , x = .1\linewidth
  , y = .1\linewidth
]
  \node at (0, 0) [photo] {\includegraphics[width=\linewidth]{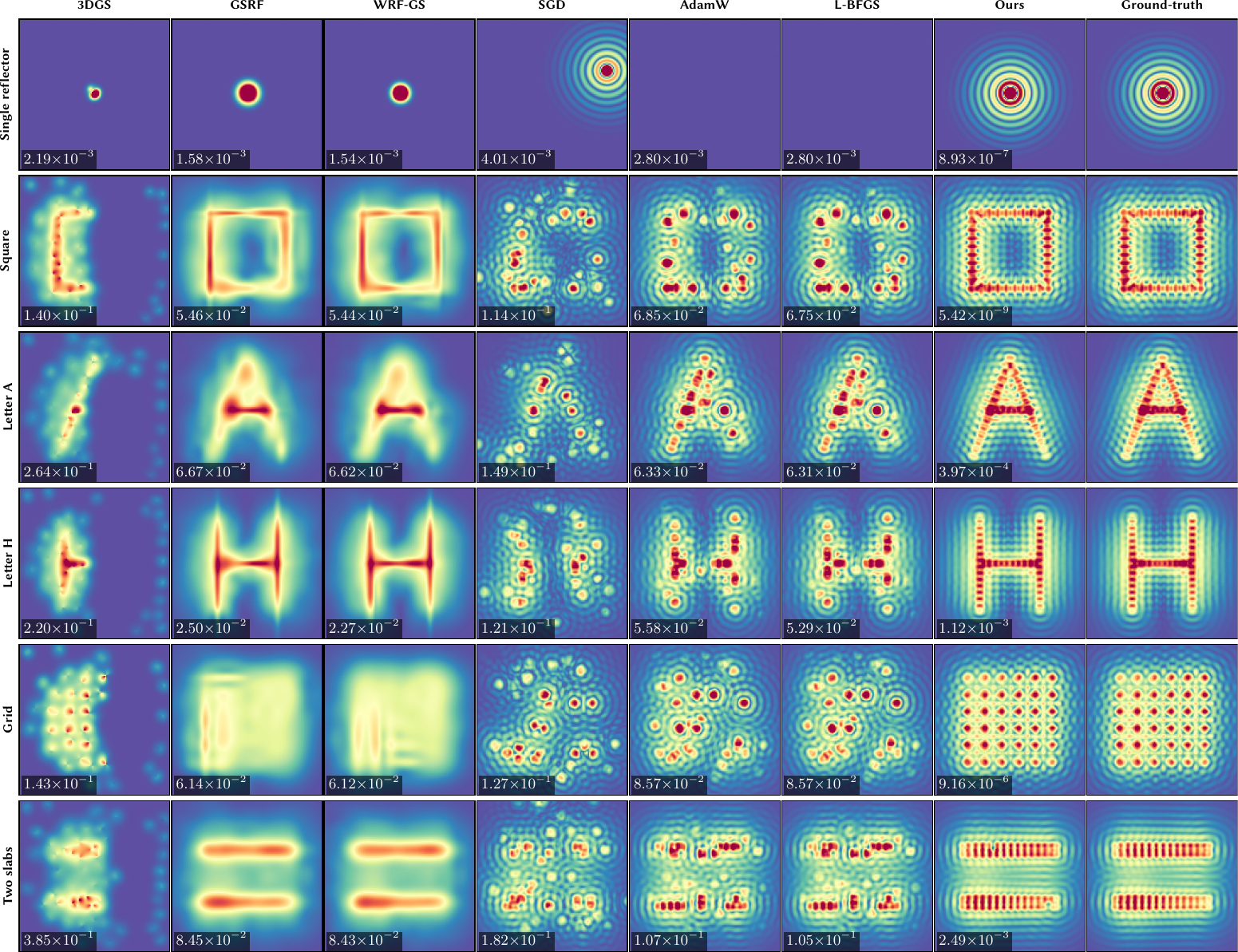}};

  \foreach \i in {0, ..., 5} {
    \node [mse label, yshift = 1pt] at (8.765, \i*1.261) {MSE};
  }
\end{tikzpicture}
\caption{%
  \textbf{Inverse fitting in signal space (\S\ref{sec:results-inverse}).}
  Methods fit Dirichlet splats directly to a 2D Fourier-space target
  tensor, with no aperture and no measurement model.
  Rows are interpretable signal-space targets (single reflector,
  square, letters A/H, grid, two slabs).
  Columns show a 3DGS-style no-phase baseline, GSRF, WRF-GS+, SGD,
  AdamW, an AdamW+L-BFGS quasi-Newton control, DSFW (ours), and the
  ground-truth scene, the noise-free Dirichlet-splat scene the
  targets were rendered from.
  Each panel shows the magnitude of the complex spectral tensor
  $\hat{Y}(\mathbf{f})$ of Eq.~\eqref{eq:method-surfel} over the two
  Fourier axes of the target.
  All methods share the same target tensor, cold initialization, splat
  budget, and loss function. Corresponding complex spatial MSE numbers
  are reported in \S\ref{sec:results-inverse}.
  Only DSFW recovers the sidelobe field. The Gaussian columns
  reconstruct the silhouette alone.}
\label{fig:inverse-result-grid}
\end{figure*}

\paragraph{Results}
Fig.~\ref{fig:forward-1d} shows the 1D comparison on a three-reflector
range profile with unequal amplitudes.
Three Dirichlet primitives reproduce the sidelobe-rich coherent waveform
directly, while a Gaussian representation must spend many splats
to approximate the same dB profile.
The right column shows the corresponding 2D Dirichlet interference pattern on
the same three-reflector configuration in linear scale.
With only three Gaussian splats, the fit captures the main peaks but misses
the sidelobe structure. Increasing the budget improves the envelope only
gradually, implying the Gaussian mismatch is a representation issue,
not just an optimizer issue.
Quantitatively, the three Dirichlet primitives are exact up to the numerical
floor (\qty{-120}{dB} NMSE), while Gaussian controls improve only slowly with splat
count: $3$, $8$, and $16$ splats remain between $-6.1$ and \qty{-8.4}{dB} NMSE, and
even $96$ splats reach only \qty{-30.1}{dB}.

\begingroup\ignorespaces
\paragraph{Doppler validation}
\label{sec:results-doppler}
The same kernel extends to the Doppler axis. Each surfel carries a
constant velocity that the rendering operator of
\S\ref{sec:method-spectral} projects into the slow-time spectrum.
We validate this projection on a $96{\times}64{\times}8$
frequency-modulated continuous-wave (FMCW) cube containing three
reflectors moving at $-6.0$, $0.0$, and $7.5$~m/s, with at most
$0.051$ range bins of range walk over the dwell.
The joint position--velocity Dirichlet model matches the finite-window
FFT oracle to the numerical floor (\qty{-120}{dB} complex NMSE) and picks
the correct velocity for every reflector from a \qty{0.05}{m/s} search
grid. Clamping velocity to zero on the same scene degrades the fit to
\qty{-1.11}{dB}, which isolates the contribution of the Doppler
parameter.
The experiment isolates the constant-velocity
parameterization. Dynamic-scene reconstruction remains outside its
scope.
Acquisition constants and the search protocol are in
Appendix~\ref{app:doppler-validation}.%
\par\endgroup\ignorespaces

\subsection{Inverse Fitting in Signal Space}
\label{sec:results-inverse}

This experiment tests the optimization claim of \S\ref{sec:opt}.
We scope the experiment to the \emph{signal level}, fitting Dirichlet
splats directly to a 2D Fourier-space target tensor with no aperture and
no measurement model.
The setup isolates the solver from any imaging error.
The full inverse imaging problem (raw measurements $\to$ geometry) is
studied in \S\ref{sec:results-3d}.
We compare DSFW against Dirichlet-AdamW and stochastic gradient
descent (SGD) optimizer controls and against
GSRF~\cite{yang2025gsrf} and WRF-GS+~\cite{wen2025wrfgs}
(\S\ref{sec:results-setup}). Optimizer-only and kernel-only ablations are
deferred to \S\ref{sec:results-ablations}.

\paragraph{Setup}
All targets in this section are synthetic 2D Fourier-space tensors of known Dirichlet-splat scenes, with methods fitting Dirichlet splats directly to them.
The convergence and basin studies use sparse-splat scenes:
the
random $K=10$ 2D scene, the structured $K=50$ star scene,
and the measured $K=5$ basin sweep.
The result grid (Fig.~\ref{fig:inverse-result-grid}) displays the same
signal-space pipeline on interpretable 
$128{\times}128$-bin spectral targets.
Splats are cold-initialized uniformly at random with no scene prior, and splat budget and loss follow \S\ref{sec:results-setup}.

\begin{figure}[t]
\centering
\includegraphics[width=\columnwidth]{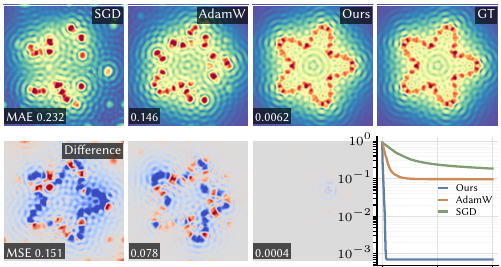}
\captionsetup{skip=4pt}
\caption{%
  \textbf{Cold-start convergence on a structured-star scene.}
  DSFW (ours) recovers the reflector field, while Dirichlet-AdamW
  stalls in sidelobe minima and SGD remains farther from the
  ground-truth scene under the same iteration budget.
  Top: magnitude reconstructions, labelled with mean absolute
  error (MAE).
  Bottom: complex spatial mean squared error (MSE) trajectories and
  signed magnitude-difference maps against the ground-truth scene, labelled
  with MSE.%
}%
\label{fig:convergence}%
\Description{A grid of magnitude reconstructions of a structured star scene for DSFW, Dirichlet-AdamW, and SGD, above loss-versus-iteration curves and signed magnitude-difference maps against the ground-truth scene.}
\vspace{-18pt}%
\end{figure}

\paragraph{Results}
Fig.~\ref{fig:convergence} shows a representative cold-start case.
DSFW recovers the reflector field and drives down the complex spatial loss, Dirichlet-AdamW lowers it into a mismatched sidelobe arrangement, and SGD lags farther under the same budget.
Across a $33{\times}33$ grid of cold-start offsets over 8 bins,
DSFW recovers all reflector positions to within $0.1$~bin on $77.4\%$ of trials,
against $0.07\%$ for Dirichlet-AdamW,
which stays trapped in the sidelobe neighborhoods of \S\ref{sec:opt-challenges}
(heat map in
Appendix~\ref{app:cross-scene-statistics}).
Fig.~\ref{fig:inverse-result-grid} extends the pattern to recognizable Fourier-domain targets.
Over the 7-scene benchmark, DSFW drives mean complex MSE from $3.32{\times}10^{-1}$ at initialization to $6.71{\times}10^{-4}$.
By contrast, Dirichlet-AdamW and SGD end at $6.06{\times}10^{-2}$ and
$1.12{\times}10^{-1}$, respectively.
The two-stage AdamW+L-BFGS quasi-Newton control reaches
$5.97{\times}10^{-2}$ after a $300$-step AdamW warmup and $50$ L-BFGS polish
iterations, giving only a small local refinement over AdamW and remaining two
orders of magnitude above DSFW.
The 3DGS-style no-phase baseline ends at $1.92{\times}10^{-1}$.
Its panels are visibly suppressed on one side of several letters and
slabs because it renders only a positive real Gaussian radiance field.
On those halves the complex Dirichlet target has strong imaginary
components or negative real projections, so the least-squares optimum
drives many splats toward low opacity rather than matching the displayed
magnitude.
GSRF~\cite{yang2025gsrf} and WRF-GS+~\cite{wen2025wrfgs} end at
$4.43{\times}10^{-2}$ and $4.38{\times}10^{-2}$. The Gaussian baselines
reconstruct smooth silhouettes but do not match the Dirichlet sidelobe field.
Cross-scene PSNR, magnitude NMSE, and SSIM are reported in
Appendix~\ref{app:cross-scene-statistics}.%

\subsection{Inverse Imaging from Raw Signals}
\label{sec:results-3d}

\begin{figure*}[!tp]
\centering
\includegraphics[width=\textwidth]{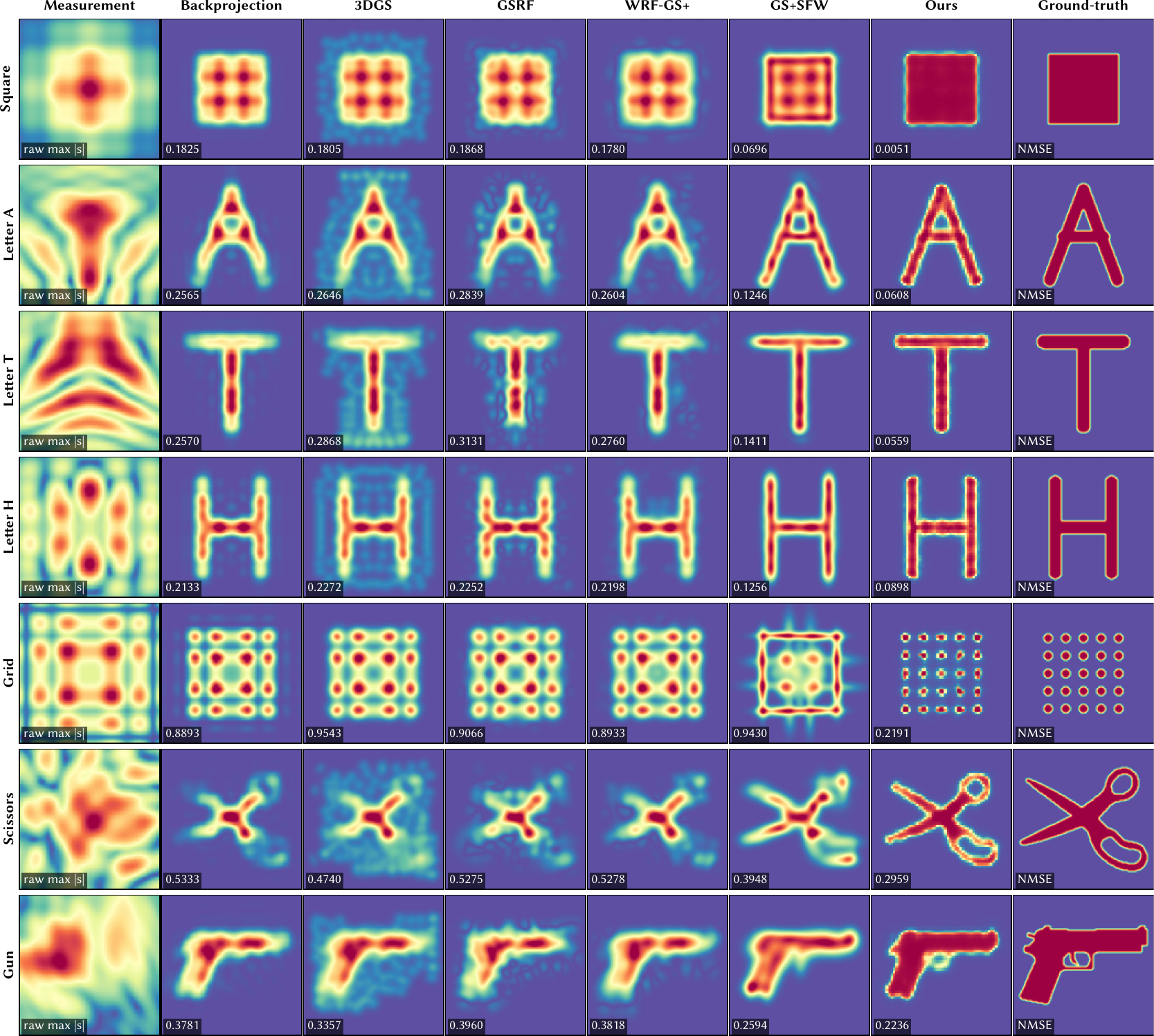}
\caption{%
  \textbf{Inverse imaging from raw signals (\S\ref{sec:results-3d}).}
  Each method fits raw complex multi-aperture range profiles of a target
  with known material response and decodes a dense geometric support
  estimate; reconstruction-panel labels report support-domain NMSE against
  the ground-truth support, meaning the binary occupancy map the
  simulator rendered from, with lower values better.
  Ours and GS+SFW are supervised on the raw range profiles and the
  remaining learned baselines on the back-projection image decoded from
  those same profiles. All learned methods share the cold initialization
  and the splat budget $S$.
  Rows are seven dense planar targets: a square, letters A/T/H, a reflector
  grid, a scissors silhouette, and a binarized pistol image.
  The leftmost column, {\emph{Measurement}}, is a measurement-domain sanity check (the simulator
  output collapsed to the per-aperture maximum range-profile magnitude),
  plotted in aperture coordinates rather than object coordinates, and is
  excluded from the support-domain NMSE labels.
  The remaining columns show back-projection, 3DGS, GSRF, WRF-GS+,
  GS+SFW, the raw-profile Dirichlet reconstruction (ours), and the
  ground-truth support.
  Only the raw-profile Dirichlet column keeps thin and separated
  components intact. The Gaussian columns merge them into a single
  blob.}%
\label{fig:known-plane-reconstruction}%
\Description{A seven-row by eight-column grid. Each row is a planar target and each column a method, from a measurement-domain sanity check through back-projection and four Gaussian splatting baselines to our raw-profile Dirichlet reconstruction and the ground-truth support. Panels carry support-domain error labels.}%
\end{figure*}

\S\ref{sec:results-inverse} fits splats to signal-space targets and
isolates the solver from the imaging measurement model.
We now evaluate the full inverse imaging problem: the input is a raw
complex multi-aperture scan of a target with known material response.
The output is a dense geometric support estimate, scored against ground-truth occupancy.

\paragraph{Setup}
The inverse-imaging grid uses 7 dense targets shown in Fig.~\ref{fig:known-plane-reconstruction}.
The simulator renders each target into raw complex range profiles using a
$96{\times}96$ aperture, \qty{4}{GHz} bandwidth, and a $56{\times}56$ candidate
grid.
The simulator follows the standard FMCW signal chain, synthesizing the
chirp echo from the target occupancy and material response, dechirping it
against the transmitted waveform, and taking the FFT of the beat signal.
Our model fits a Dirichlet support field directly to these raw profiles,
with eikonal, total variation, and binarization regularization.
Among baselines from \S\ref{sec:results-setup}, 3DGS, GSRF, and WRF-GS+ are
supervised on the back-projection image. \textbf{GS+SFW}, our coherent
Gaussian baseline sharing DSFW's forward model and SFW solver but with a
Gaussian kernel, is supervised on the raw range profiles.
All methods decode into a common occupancy space.

\paragraph{Results}
Reconstruction grids and per-scene support-domain NMSE are shown in Fig.~\ref{fig:known-plane-reconstruction}.
Ours reaches the lowest mean support NMSE across 7 targets ($0.136$), against $0.294$
(GS+SFW), $0.387$ (back-projection), $0.389$ (3DGS), $0.391$ (WRF-GS+),
and $0.406$ (image-domain GSRF).
Raw-channel fitting improves GSRF without closing the gap:
its raw NMSE stays an order of magnitude higher ($9.12{\times}10^{-3}$ vs.\ $7.74{\times}10^{-4}$ for ours),
with wider support-NMSE gaps on the grid ($0.943$ vs.\ $0.219$) and
scissors ($0.395$ vs.\ $0.296$).
GS+SFW is competitive on simple stroke-like shapes (e.g., letters A/T/H), which is consistent with broad Gaussian fields being effective
for smooth connected support but less reliable for separated or thin
multi-component geometry.
Cross-scene support NMSE, PSNR, SSIM, IoU, and Dice are reported in
Appendix~\ref{app:cross-scene-statistics}.%

\begingroup\ignorespaces
\subsection{3D Reconstruction}
\label{sec:results-bunny}

\begin{figure}
\newcommand{\bunnypanel}[1]{%
  \includegraphics[width=0.185\linewidth]{figures/bunny3d_view_#1.png}}%
\begin{tikzpicture}[
  x=0.193\columnwidth, y=0.193\columnwidth,
  panel/.style={inner sep=0, outer sep=0, anchor=north west},
  rlab/.style={font=\footnotesize\sffamily\bfseries, anchor=south, rotate=90,
               inner sep=0pt, yshift=1pt},
]
  \foreach \view [count=\i from 0] in {front, off0, off90, off180, off270} {
    \node[panel] (ours\i) at (\i, 0)  {\bunnypanel{ours_\view}};
    \node[panel] (ref\i)  at (\i, -1) {\bunnypanel{ref_\view}};
  }
  \node[rlab] at (-0.02, -0.5) {Ours};
  \node[rlab] at (-0.02, -1.5) {Ground truth};
\end{tikzpicture}%
\captionsetup{skip=4pt}
\caption{\textbf{3D reconstruction of the Stanford bunny
(\S\ref{sec:results-bunny}).}
We jointly fit one global surfel set to the raw complex range profiles
of five apertures clustered within \ang{25} of the front axis and render
it from the corresponding view directions. Color encodes depth along
the view axis using the colormap of Figs.~\ref{fig:inverse-result-grid}
and~\ref{fig:known-plane-reconstruction}. Each column uses the
ground-truth depth range for direct comparison. Our method recovers the
silhouette and ears from all five views.}
\label{fig:bunny3d}
\Description{Two rows of five panels each. The top row shows our
reconstruction of the Stanford bunny rendered as a depth map from five view
directions, and the bottom row shows the ground-truth surface rendered the same
way for comparison.}
\vspace{-12pt}%
\end{figure}

We evaluate Dirichlet Splatting on a 3D target, fitting
one global surfel set jointly across apertures that view it from
different directions.

\paragraph{Setup}
The target is a \qty{60}{mm} Stanford bunny. We sample its closed mesh
into $60{,}000$ oriented surfels and render each view with the forward
model of \S\ref{sec:method}. Each view uses a $128{\times}128$ virtual
aperture at \qty{2.5}{mm} pitch and \qty{350}{mm} standoff. The sensor
uses a \qty{120}{GHz} carrier, \qty{15}{GHz} bandwidth, and a
\qty{320}{mm} aperture, yielding \qty{1.4}{mm} resolution across the
target.
The fit runs against the concatenated raw complex range profiles of all
views under complex mean squared error. We concatenate the raw profiles
of the $V$ views into a single tensor. The foreshortening factor $V_i$ of
Eq.~\eqref{eq:method-foreshortening} provides per-view visibility, so
multi-view fitting reuses the forward model without a separate
visibility solver. We seed surfel centers from peaks of a joint
back-projection volume using the optional initializer of
\S\ref{sec:opt-dsfw}.

\paragraph{Metrics}
We measure (i) the median distance from each recovered surfel to the
ground-truth surface, (ii) F-score and recall at \qty{2}{mm}, (iii)
symmetric Chamfer distance, (iv) NMSE on the raw profiles, and (v) the
fraction of the ground-truth surface that at least one view observes.
Scores restricted to that fraction are marked \emph{recoverable}.
For a matched-budget comparison, we apply the same non-maximum
suppression to the brightest back-projection peaks and retain the same
number of points as our surviving surfels. We evaluate both point sets
with the same metrics.
Multi-view numbers are the mean $\pm$ half-range over three seeds.

\paragraph{Results}
Fig.~\ref{fig:bunny3d} shows the reconstruction from the five fitted view
directions. Three orthogonal views increase the observed fraction from
$0.35$ to $0.596$. They reach a full-surface recall@2mm of
$0.439 \pm 0.011$, a median surface distance of $1.67 \pm 0.07$~mm, and
a raw-profile NMSE of $0.203 \pm 0.012$. No single view reaches this
recall. Three views clustered within \ang{25} observe a more overlapping
surface region. Their raw-profile NMSE decreases to $0.069 \pm 0.012$,
and their recoverable F-score@2mm increases to $0.647 \pm 0.047$, at the
cost of lower surface coverage. Five clustered views reduce the median
surface distance to $1.49 \pm 0.21$~mm.
At a matched point budget on the three-view cluster, our reconstruction
reaches a median surface distance of \qty{1.51}{mm}, compared with
\qty{1.92}{mm} for back-projection. Its recoverable recall@2mm is $0.750$
versus $0.530$, and its symmetric Chamfer distance is \qty{3.87}{mm}
versus \qty{5.17}{mm}. Back-projection superposes energy without
enforcing cross-view consistency. Defocus caustics from individual views
therefore remain off the surface. It also produces an energy volume
rather than a model of the raw signal.

\par\endgroup\ignorespaces

\subsection{Ablations}
\label{sec:results-ablations}

We isolate the two main contributions, the Dirichlet kernel and the
DSFW solver, through a $2{\times}2$ factorial over kernel (Gaussian,
Dirichlet) and optimizer (AdamW, SFW), evaluated on a five-reflector
cold-start benchmark over $10$ seeds at a
shared splat budget and initialization, counting a trial successful when
every matched center lands within $0.1$~bin of the ground-truth scene.
The proposed pair succeeds on $100.0\%$ of trials at a final complex
spatial MSE of $5.32{\times}10^{-9}$.
\ttf{(i)} Holding the optimizer at SFW and replacing the Dirichlet kernel
with a Gaussian drops the success rate to $10.0\%$ and raises the
MSE from $5.32{\times}10^{-9}$ to $6.71{\times}10^{-3}$, a gap of more
than $10^6{\times}$, and reintroduces the ring artifacts of
\S\ref{sec:results-3d}.
Certificate-driven global search alone cannot compensate for a kernel
that discards sidelobe energy.
\ttf{(ii)} Holding the kernel at Dirichlet and replacing SFW with AdamW
reduces the cold-start success rate from $100.0\%$ to $0.0\%$
and raises the MSE to $1.18{\times}10^{-2}$,
confirming that the correct kernel is necessary but not sufficient on
the rugged landscape of \S\ref{sec:opt-challenges}.
Replacing both components with the standard Gaussian--AdamW
pairing yields a $0.0\%$ success rate and an MSE of
$8.54{\times}10^{-3}$.
Appendix~\ref{app:per-component-ablations} reports center RMSE.

\subsection{Application: Sub-THz Imaging}
\label{sec:results-app}
\label{sec:results-thz}

Sub-THz imaging penetrates optically opaque materials at quasi-optical resolution~\cite{willis2013infrastructs} and is exactly the finite-aperture-diffraction regime that motivated the Dirichlet kernel. We evaluate the full pipeline on real sub-THz measurements.

\paragraph{Setup}
We reuse the raw-signal fitting protocol of \S\ref{sec:results-3d} on
measurements from a custom-built sub-THz software-defined radio
(\qty{120}{GHz} center, \qty{1.23}{GHz} bandwidth, 8192 samples per
chirp).
We evaluate three metal letter scans: T, H, and Z.
We call a measured capture a \emph{raw scan}, as distinct from the
synthetic ground-truth support of \S\ref{sec:results-3d}.
The T scan uses a $37{\times}49$
\qty{2.5}{mm} raster; H and Z use $37{\times}37$ rasters. All use
50 coherent captures per aperture, a \ang{20} Tx/Rx bistatic
angle, and a \qty{15}{cm} focus distance.
We compare a
region-of-interest power projection (ROI Power), bistatic
back-projection, and GS+SFW from \S\ref{sec:results-3d}. Without a pixel-accurate ground-truth mask,
we report label-free contrast, support compactness, and raw-profile
fit residuals.

\begin{figure*}[!tp]
\centering
\begin{tikzpicture}
  \node[anchor=south west, inner sep=0] (thzimg)
    {\includegraphics[width=\textwidth,height=0.23\textheight,keepaspectratio]{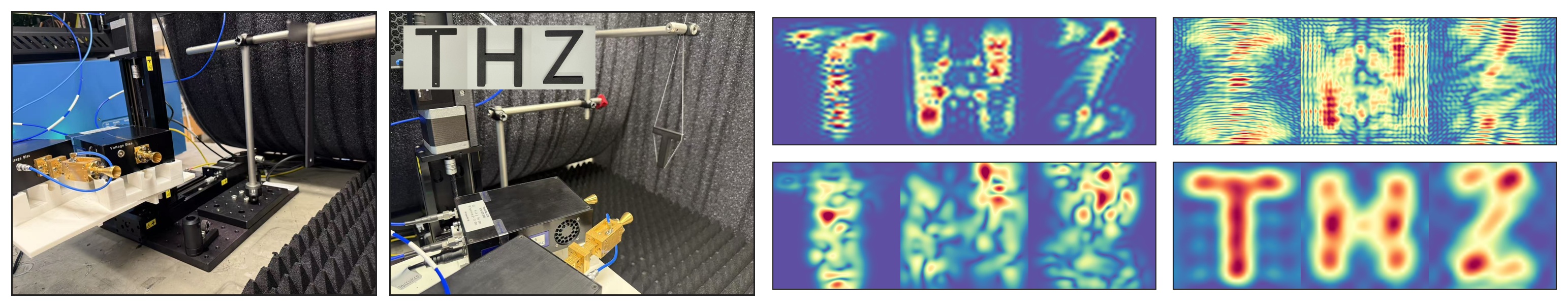}};
  \begin{scope}[x={(thzimg.south east)},y={(thzimg.north west)}]
    \tikzset{
      thzbox/.style={draw=cyan!85!blue,line width=0.55pt},
      thzboxhalo/.style={draw=white,line width=1.75pt},
      thzcallout/.style={draw=cyan!85!blue,line width=0.55pt,
        densely dashed,-{Latex[length=2.2pt,width=2.2pt]}},
      thzlabel/.style={fill=black,text=cyan!85!blue,inner sep=0.8pt,
        font=\scriptsize\sffamily\bfseries},
      thzgrouplabel/.style={fill=black,fill opacity=0.62,text opacity=1,
        text=white,inner xsep=2pt,inner ysep=0.8pt,
        outer sep=0,
        font=\scriptsize\sffamily\bfseries}
    }
    \draw[thzboxhalo] (0.035,0.341) rectangle (0.074,0.536);
    \draw[thzbox] (0.035,0.341) rectangle (0.074,0.536);
    \node[thzlabel,anchor=south west] at (0.035,0.545) {Tx};
    \draw[thzboxhalo] (0.087,0.407) rectangle (0.116,0.560);
    \draw[thzbox] (0.087,0.407) rectangle (0.116,0.560);
    \node[thzlabel,anchor=south west] at (0.087,0.568) {Rx};
    \draw[thzboxhalo,line width=2.0pt] (0.414,0.411) rectangle (0.44,0.603);
    \draw[thzbox,line width=0.78pt] (0.414,0.411) rectangle (0.44,0.603);
    \draw[thzcallout] (0.414,0.603) -- (0.380,0.709);
    \draw[white,line width=1.7pt] (0.257,0.709) rectangle (0.380,0.917);
    \draw[black,line width=0.55pt] (0.257,0.709) rectangle (0.380,0.917);
    \node[thzgrouplabel,anchor=south west] at (0.492,0.530) {(a) ROI Power};
    \node[thzgrouplabel,anchor=south west] at (0.748,0.530) {(b) Back Projection};
    \node[thzgrouplabel,anchor=south west] at (0.492,0.060) {(c) Gaussian};
    \node[thzgrouplabel,anchor=south west] at (0.748,0.060) {(d) Ours};
  \end{scope}
\end{tikzpicture}
\caption{%
  \textbf{Real sub-THz reconstruction.}
  The imaged scene consists of three metal letters T, H, and Z suspended
  in front of a sub-THz transceiver pair (left: setup photograph and a
  zoomed view of the hanging target).
  Panels (a)--(d) compare four reconstructions on the same measurement:
  (a) region of interest power, (b) back projection, (c) Gaussian, and
  (d) ours.
  All four methods decode the same raw scan on a common pixel
  grid and color scale. Our reconstruction preserves the three separated
  letters while reproducing the measured finite-window sidelobes.}%
\label{fig:thz}%
\Description{A photograph of a sub-terahertz transceiver pair with the transmit and receive probes outlined, a zoomed view of three metal letters hanging in front of them, and four reconstructions of those letters from the same raw scan: region-of-interest power, back projection, a Gaussian fit, and ours.}%
\end{figure*}
\paragraph{Results}
Fig.~\ref{fig:thz} and the three-scan baseline audit show distinct
failure modes.
ROI Power is a strong aperture-coordinate energy map
(average $p_{99}/p_{50}$ contrast $44.8$, robust SNR $30.5$),
yet not a geometric inversion:
at a $35\%$-peak threshold it occupies only $9.6\%$ of pixels across $10.3$ disjoint components.
Back-projection brings in the bistatic geometry at the cost of diffraction sidelobes: with $p_{99}/p_{50}$ falling to $3.71$
and $35\%$-peak support spreading over $26.2\%$ of pixels in $69.0$ components.
GS+SFW reaches mean normalized loss $1.70$ ($1.95$/$1.42$/$1.72$ for T/H/Z) on a compact
support ($10.7\%$ area, $8.3$ components) of broad blobs that miss the measured finite-window sidelobe structure.

\begin{figure}
\begin{tikzpicture}[
  panel/.style={inner sep=0, outer sep=0, anchor=south west},
  plab/.style={fill=black, fill opacity=0.62, text opacity=1, text=white,
               inner xsep=2pt, inner ysep=0.8pt, outer sep=0,
               anchor=south west, font=\scriptsize\sffamily\bfseries},
  , x = .1\linewidth
  , y = .1\linewidth
]
  \node[panel] (setup) at (0, 0)
    {\includegraphics[height=.42\linewidth]{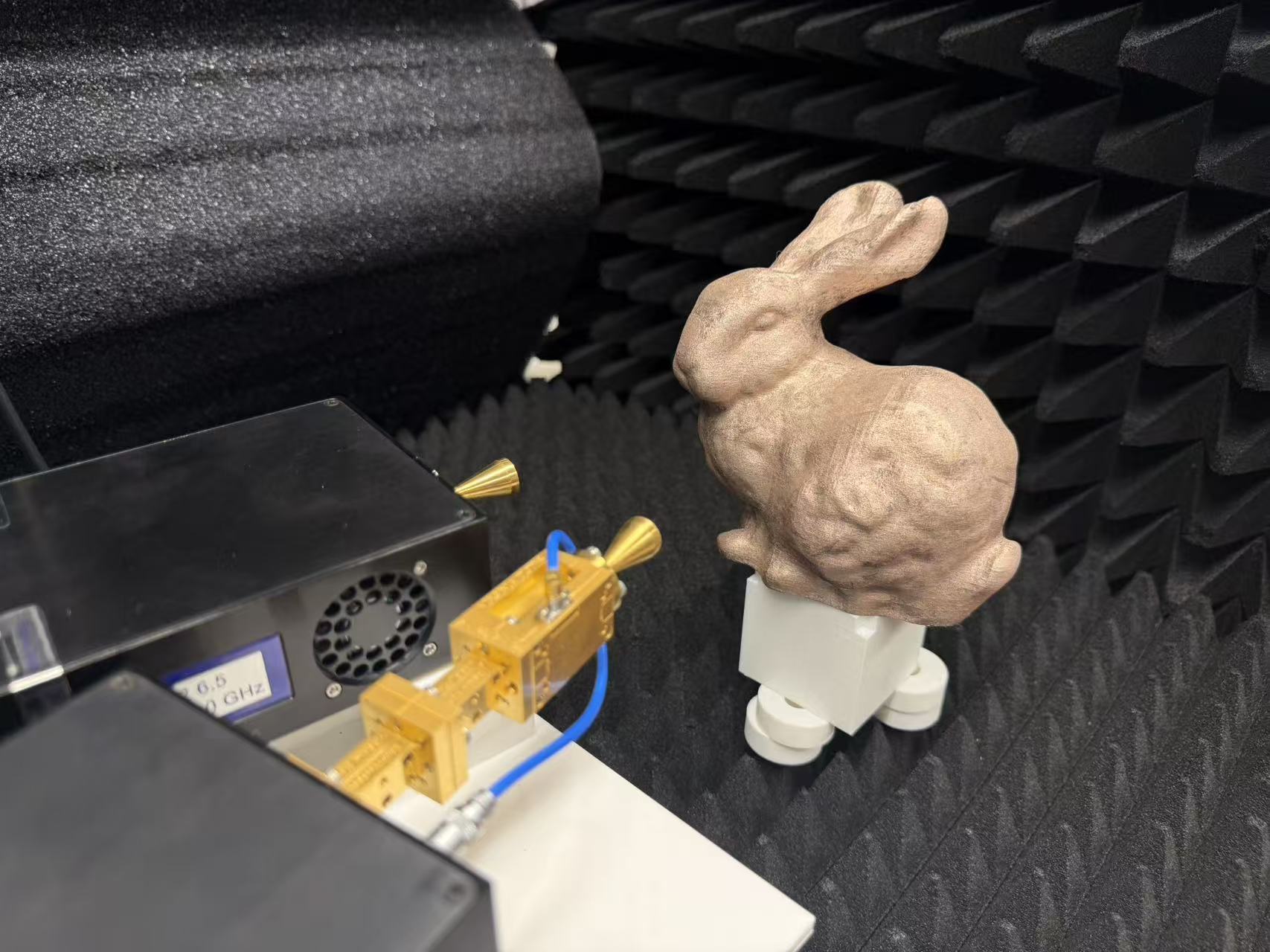}};
  \node[panel, anchor = south east] (ours) at (10, 0)
    {\includegraphics[height=.42\linewidth]{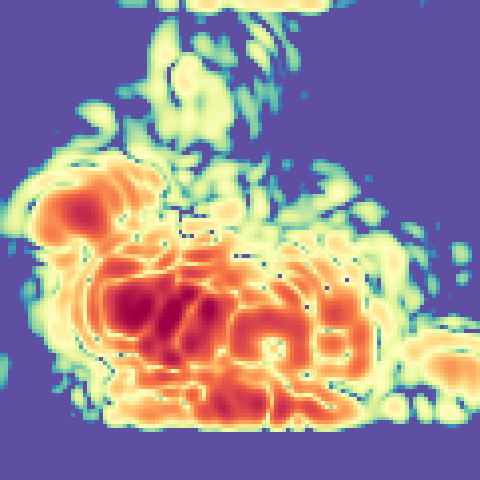}};
  \node[plab] at (setup.south west) {(a) Printed target};
  \node[plab] at (ours.south west) {(b) Ours};
\end{tikzpicture}
\captionsetup{skip=4pt}
\caption{\textbf{Reconstruction of a 3D-printed object (\S\ref{sec:results-thz}).}
\textbf{(a)}~The metallized target inside the absorber-lined enclosure.
\textbf{(b)}~Our reconstruction over the full scan window, rendered with
the measurement operator and displayed using the colormap of
Fig.~\ref{fig:thz} with gamma correction.}
\label{fig:printed-bunny}
\Description{Two panels. A photograph of the 3D-printed target, a
palm-sized figurine coated in conductive ink, standing on a narrow mount
inside absorber foam beside the two horn antennas of a sub-terahertz
transceiver, and next to it our reconstruction of that target, which shows its
outline surrounded by diffraction sidelobes.}
\vspace{-20pt}%
\end{figure}
\paragraph{3D printed object}
To evaluate the same pipeline on a real object with curvature, we
3D-print the target of \S\ref{sec:results-bunny} at \qty{120}{mm}, coat
it with conductive ink, and stand it on a narrow mount inside
an absorber-lined enclosure (Fig.~\ref{fig:printed-bunny}a).
The same transceiver scans the target confocally over a
$121{\times}121$ aperture at \qty{1.25}{mm} pitch across a
\qty{150}{mm} square window. We sweep \qty{4.8}{GHz} at a
\qty{150}{mm} focus and coherently average $32$ chirps per aperture
position. One range cell spans \qty{31.25}{mm}.
We register eight rasters and retain five after discarding three with
broadband interference. Two rasters use \qty{2.5}{mm} pitch. We also
remove aperture positions dominated by the mount return, which is
stronger than the body return in the same range cell.
The target is a continuous surface, so we fit a dense field of $4991$
Dirichlet atoms whose centers stay on a fixed lattice, with the same
kernel, forward model, and loss.
The resulting fit reaches $0.482$ NMSE against the stacked raw complex
range profiles.

Fig.~\ref{fig:printed-bunny}b shows the reconstruction. At a $35\%$
threshold over the full scan window, $93\%$ of the recovered support lies
in its largest connected component. In comparison, the measured
range-gated energy and back-projection contain $34$ and $6$ components,
respectively. Rendering the fitted field with the measurement operator
yields a dB-domain correlation of $0.74$ with the raw-scan image over the
object.
Two instrument limits bound the result: the transceiver is modeled as an
ideal confocal pair, and the aperture resolves \qty{2.5}{mm} across the
target against a range cell of \qty{31.25}{mm}.

\begingroup\ignorespaces
\subsection{Wavefront Sensing}
\label{sec:results-wavefront}

A finite pupil produces the same response in visible light, allowing the
same primitive to represent a wavefront. We simulate off-axis holograms
of four coherent beacons observed through a continuous \qty{5}{mm}
square pupil at \qty{532}{nm} under Poisson and read noise. After
sideband demodulation, each method fits the beacon positions, complex
amplitudes, and twelve shared Zernike modes to the recovered field using
the same initialization and parameter count. We generate the reference
field by Gauss--Legendre quadrature over the continuous aperture, so the
Dirichlet atoms approximate rather than generate the measurements.
Decreasing the beacon separation creates overlapping sidelobe lattices
and tests whether the methods can resolve them.
At one main-lobe FWHM of separation, the Dirichlet fit reaches $0.021$
waves of residual RMS wavefront error over fifty trials, compared with
$0.051$ for a diffraction-matched Gaussian and $0.110$ for
back-projection. Across the tested separations, it remains within two
standard errors of an oracle using the continuous-aperture response. At
separations down to half an FWHM, only the Dirichlet fit and the oracle
remain below $\lambda/30$. The Gaussian fit requires four times this
separation to meet the same threshold. At one FWHM, the Gaussian and
Dirichlet fits fail in $52\%$ and $16\%$ of trials, respectively.
\par\endgroup\ignorespaces

\section{Limitations and Future Work}
\label{sec:limitations}

\paragraph{Indirect propagation}
Multipath and inter-object scattering are modeled as
virtual surfels at the total path length, which the renderer reproduces in intensity.
The optimizer treats them as independent reflectors and
lacks the method-of-images constraint tying them to the primary
geometry, so the recovered support contains ghosts at correct path
lengths but incorrect physical locations.

\vspace{0pt plus 1fill}%
\paragraph{Occlusion}
The visibility term in Eq.~\eqref{eq:method-foreshortening} handles
back-facing surfels but not opaque self-occlusion between surfels.
In partially transmissive modalities (THz, mmWave,
synthetic-aperture sonar), occluded contributions cancel through
coherent phase rather than a discrete visibility test, and the
omission is benign for our experiments.
\vspace{0pt plus 1fill}
\paragraph{Hardware calibration}
The forward model assumes known aperture positions and a calibrated
transceiver response. The aperture extent and sampling pitch determine
the kernel. Position errors produce wavelength-scale phase errors. An
uncalibrated antenna pattern produces per-aperture amplitude errors that
the fit absorbs into $\alpha_i$. Bandwidth determines the range
resolution. Joint calibration and reconstruction is a promising
direction for future work.

\vspace{0pt plus 1fill}
\paragraph{DSFW failure modes}
DSFW has two calibration-related failure modes.
\textit{(i)}~Gross mis-scaling of $A_i$ pushes the inner ridge to absorb the
error into $|b_i|$, demoting correctly placed surfels in the utility
ranking.
\textit{(ii)}~A mis-calibrated noise floor $\sigma^2$ shifts the
certificate magnitude, causing DSFW to either fit noise as splats
(under-calibrated) or stop short of the true sparsity (over-calibrated).

\vspace{0pt plus 1fill}
\section{Conclusion}
\label{sec:conclusion}

We present \emph{Dirichlet Splatting}, a differentiable renderer for
inverse problems with wave-based coherent sensors. It replaces the
Gaussian footprint inherited from 3D Gaussian Splatting with the
analytic Dirichlet point-spread function and represents each primitive
as a surfel whose complex amplitude follows wave propagation.
Dirichlet Sliding Frank--Wolfe addresses the rugged sidelobe landscape
induced by this oscillatory kernel. We believe Dirichlet Splatting is a step
toward physically grounded differentiable rendering for radar, sonar,
and other coherent sensing modalities.

\clearpage %
\begin{acks}
We appreciate the insightful comments and feedback from the anonymous
reviewers and shepherd.  This work is partially supported by the NSF grants
2238839, 2341952, 2403124, and 2408393, and gifts from Adobe,
Google, Activision, and Meshy.
\end{acks}

\renewcommand{\bibfont}{\fontsize{7pt}{7.7pt}\selectfont}
\bibliographystyle{ACM-Reference-Format}
\bibliography{sample-base}

\appendix
\begin{table*}[t]
\centering
\small
\setlength{\tabcolsep}{6pt}
\caption{\textbf{Notation.} Symbols used throughout the paper,
grouped by role. Per-surfel quantities carry index
$i\!\in\!\{1,\dots,S\}$; $\lambda$ denotes the carrier wavelength in
physical contexts and the ridge weight in optimization contexts,
disambiguated by usage.}
\label{tab:notation}
\begin{tabularx}{\linewidth}{@{}l X@{}}
\toprule
\multicolumn{2}{@{}l}{\textit{Scene and sensor}} \\
\midrule
$\mathbf{p}_i, \mathbf{n}_i, A_i, \alpha_i, \mathbf{v}_i, \eta_i$ &
  Surfel position, normal, cell area, complex amplitude, velocity,
  wave impedance; collected in $\Theta_i$ \\
$S$, $C\!\in\!\mathbb{R}^{S\times 3}$ &
  Splat budget; stack of all surfel centers \\
$\mathbf{o}, R_i, \hat{\mathbf{r}}_i$ &
  Sensor origin; range and line-of-sight unit vector to surfel $i$ \\
$V_i, F_i, \gamma$ &
  Bistatic foreshortening, Fresnel factor, path-loss exponent \\
$F_s, K, T_c, \lambda, \delta_{\mathrm{el}}$ &
  ADC rate, chirp slope, chirp repetition interval, carrier wavelength, array
  element spacing \\
\midrule
\multicolumn{2}{@{}l}{\textit{Fourier domain and kernels}} \\
\midrule
$\ell\!\in\!\{R,D,A\}$, $N_R, N_D, N_A$ &
  Axis index and per-axis FFT lengths (range / Doppler / azimuth) \\
$\norm{\Omega}=N_R N_D N_A$ &
  Total measurement dimension \\
$\mu_{\ell,i}$ &
  Continuous Fourier coordinate of surfel $i$ on axis $\ell$
  (Eq.~\eqref{eq:method-mu}) \\%
$\Delta = k-\mu$ &
  Continuous bin offset (bin units) \\
$\kappa_N(\Delta)$, $d_N=\kappa_N/N$ &
  Complex DFT kernel ($\kappa_N(0)\!=\!N$) and its unit-peak form \\
$\kappa^w_N, \phi_N$ &
  Windowed and power (Fej\'er) kernel variants \\
\midrule
\multicolumn{2}{@{}l}{\textit{Forward model and inverse problem}} \\
\midrule
$\mathrm{splat}_i(\mathbf{f})$ &
  Per-surfel spectral splat (Eq.~\eqref{eq:method-surfel}) \\%
$\hat Y$, $\hat H=|\hat Y|^2$ &
  Rendered complex spectral cube and power image \\
$\mathbf{y}$, $H^{\mathrm{obs}}$ &
  Measured complex cube and measured power \\
$A(C)\!\in\!\mathbb{C}^{\norm{\Omega}\times S}$, $\mathbf{b}$ &
  Dictionary matrix (column $i$ = surfel footprint) and complex
  coefficient vector \\
$\mathbf{b}^\star(C), F(C)$ &
  Closed-form ridge solution and variable-projected objective
  (Eqs.~\eqref{eq:opt-bstar} and~\eqref{eq:opt-vp}) \\%
$G, q$ &
  Gram matrix $A^{\mathsf H}A/\norm{\Omega}$ and right-hand side
  $A^{\mathsf H}\mathbf{y}/\norm{\Omega}$ \\

\midrule
\multicolumn{2}{@{}l}{\textit{DSFW certificate and search}} \\
\midrule
$s(\mathbf{x})$ &
  Residual dual certificate field (Eq.~\eqref{eq:opt-cert}) \\%
$u_j, \mathrm{eff}_j\!\in\![0,1]$ &
  Per-surfel utility and coherent-cancellation efficiency
  (Eq.~\eqref{eq:opt-utility}) \\%
$r$ &
  Per-axis truncation radius for splat support (default $8$ bins) \\
\bottomrule
\end{tabularx}
\end{table*}

\section{Mathematical Derivations}
\label{supp:theory}
\label{supp:fmcw}%

\subsection{DFT response of a windowed exponential}
\label{supp:dirichlet-1d}
\label{supp:dirichlet-properties}

\begin{theorem}[DFT response of a windowed complex exponential]
\label{thm:supp-1d}
Let $x[n] = \alpha\,w[n]\exp(\j 2\pi \nu n)$ for $n=0,\dots,N{-}1$,
where $w[n]$ is a window function.
The $N$-point DFT is
\begin{equation}
X[k] = \alpha\; W_N\!\Big(\nu - \tfrac{k}{N}\Big),
\qquad
W_N(\delta) = \sum_{n=0}^{N-1} w[n]\exp(\j 2\pi \delta n).
\label{eq:supp-window-spectrum}
\end{equation}
For the rectangular window $w[n] \equiv 1$, this reduces to
\begin{equation}
X[k] = \alpha\; \e^{\j\pi(N-1)\delta}\;
\frac{\sin(\pi N \delta)}{\sin(\pi \delta)},
\qquad
\delta = \nu - \tfrac{k}{N},
\label{eq:supp-dirichlet-delta}
\end{equation}
Its magnitude is the absolute value of the Dirichlet response:
\begin{equation}
\abs{X[k]} = \abs{\alpha}\;\abs{D_N(\delta)},
\qquad
D_N(\delta) = \frac{\sin(\pi N\delta)}{\sin(\pi\delta)}.
\label{eq:supp-dirichlet}
\end{equation}
\end{theorem}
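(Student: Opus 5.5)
The plan is a direct computation from the definition of the $N$-point DFT. I take the standard forward convention
\[
X[k]=\sum_{n=0}^{N-1}x[n]\,\e^{-\j 2\pi kn/N}
\]
and substitute $x[n]=\alpha\,w[n]\,\e^{\j 2\pi\nu n}$. The two exponentials combine into $\e^{\j 2\pi(\nu-k/N)n}$, and pulling out the constant $\alpha$ gives $X[k]=\alpha\sum_{n}w[n]\,\e^{\j 2\pi(\nu-k/N)n}$. This is exactly $\alpha\,W_N(\nu-k/N)$ with $W_N$ as in Eq.~\eqref{eq:supp-window-spectrum}. This first claim uses nothing about $w$ beyond finiteness of the sum.

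For the rectangular window I set $w\equiv 1$ and $r=\e^{\j 2\pi\delta}$, so that $W_N(\delta)=\sum_{n=0}^{N-1}r^n$. I split into two cases according to whether $r\neq 1$.

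\emph{Case $\delta\notin\mathbb{Z}$.} Here $r\neq 1$, and the geometric series gives $(r^N-1)/(r-1)$. I factor the half-angle phase out of numerator and denominator:
\[
r^N-1=\e^{\j\pi N\delta}\,(2\j)\sin(\pi N\delta),\qquad r-1=\e^{\j\pi\delta}\,(2\j)\sin(\pi\delta).
\]
Their quotient is $\e^{\j\pi(N-1)\delta}\sin(\pi N\delta)/\sin(\pi\delta)$, which is Eq.~\eqref{eq:supp-dirichlet-delta}.

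\emph{Case $\delta=m\in\mathbb{Z}$.} Every term equals $1$, so $W_N=N$. I check that the closed form extends continuously to this value. The ratio $\sin(\pi N\delta)/\sin(\pi\delta)$ tends to $(-1)^{m(N-1)}N$, and the phase factor equals $\e^{\j\pi(N-1)m}=(-1)^{m(N-1)}$. The product is therefore $N$, so the singularity is removable and the formula holds everywhere in the limiting sense.

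The magnitude statement, Eq.~\eqref{eq:supp-dirichlet}, follows at once because the phase prefactor has unit modulus.

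As a consistency check with the main text, I substitute bin units $\Delta=k-N\nu$, so that $\delta=-\Delta/N$. This turns Eq.~\eqref{eq:supp-dirichlet-delta} into $\e^{-\j\pi\frac{N-1}{N}\Delta}\sin(\pi\Delta)/\sin(\pi\Delta/N)$ after cancelling the two sign flips. That is $\kappa_N(\Delta)$ of Eq.~\eqref{eq:dirichlet-kernel}, with the same removable singularities at $\Delta\in N\mathbb{Z}$.

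I expect no real obstacle, since the argument is elementary. The step needing most care is bookkeeping: fixing the DFT sign convention so that the phase comes out as $\e^{+\j\pi(N-1)\delta}$, and treating the integer-$\delta$ case explicitly rather than dividing by zero.
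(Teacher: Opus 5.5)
Your proposal is correct and follows the same route as the paper: substitute into the DFT definition to obtain $X[k]=\alpha\,W_N(\nu-k/N)$, then sum the geometric series with the half-angle factorization for the rectangular window, and treat integer $\delta$ as a removable singularity. The paper only says this last case follows ``by continuity''; your explicit check that the closed form tends to $N$ fills in that step, and your bin-unit substitution $\delta=-\Delta/N$ matches the paper's subsequent derivation of $\kappa_N(\Delta)$.
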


\begin{proof}
Substitution into the DFT definition gives
\[
X[k]=\alpha\sum_{n=0}^{N-1} w[n]\e^{\j2\pi(\nu-k/N)n}.
\]
For $w[n]=1$, summing the geometric series yields
\begin{equation*}
\sum_{n=0}^{N-1}\e^{\j2\pi\delta n}
=\frac{1-\e^{\j2\pi N\delta}}{1-\e^{\j2\pi\delta}}
=\e^{\j\pi(N-1)\delta}\frac{\sin(\pi N\delta)}{\sin(\pi\delta)}.
\end{equation*}
At integer $\delta$, the response follows by continuity.
\end{proof}

The bin offset used in the renderer is $\Delta=k-\mu$, with
$\mu=N\nu$, so $\delta=-\Delta/N$. Thus
\begin{equation}
X[k]=\alpha\,\e^{-\j\pi\frac{N-1}{N}\Delta}
\frac{\sin(\pi\Delta)}{\sin(\pi\Delta/N)}
=\alpha\,\kappa_N(\Delta).
\label{eq:supp-kappa-bin}
\end{equation}
The power response $|X[k]|^2=|\alpha|^2|\kappa_N(\Delta)|^2$
is proportional to the Fej\'er kernel.

\subsection{Kernel derivative}
\label{supp:gradients-kernel}

For the complex kernel
$\kappa_N(\Delta) = \e^{-\j\pi\frac{N-1}{N}\Delta}\cdot
\frac{\sin(\pi\Delta)}{\sin(\pi\Delta/N)}$,
the derivative with respect to the bin offset $\Delta$ is
\begin{equation}
\frac{\partial \kappa_N(\Delta)}{\partial \Delta}
=
\kappa_N(\Delta)\left(
-\j\pi\frac{N-1}{N}
+\pi\cot(\pi\Delta)
-\frac{\pi}{N}\cot\!\Big(\frac{\pi}{N}\Delta\Big)
\right).
\label{eq:supp-kappa-deriv}
\end{equation}

The product rule gives the phase term, and differentiating the
sine ratio gives the two cotangent terms. At integer offsets, the
factored expression must be evaluated by a limit. An equivalent
expression without singular factors follows directly from the finite sum:
\begin{equation}
\kappa_N'(\Delta)=-\frac{\j 2\pi}{N}
\sum_{n=0}^{N-1}n\,\e^{-\j 2\pi n\Delta/N}.
\label{eq:supp-kappa-deriv-sum}
\end{equation}
Appendix~\ref{supp:impl-safeguards} gives the limiting forms.

\paragraph{Geometry derivatives}
Let $a_i$ denote the amplitude factor multiplying the separable
kernel product in Eq.~\eqref{eq:method-surfel}, and let
$\Delta_{\ell,i}=k_\ell-\mu_{\ell,i}$. For a scene parameter $\xi$,
\begin{equation}
\begin{aligned}
\partial_\xi\mathrm{splat}_i
&=(\partial_\xi a_i)\prod_\ell d_{N_\ell}(\Delta_{\ell,i})\\
&\quad-a_i\sum_\ell (\partial_\xi\mu_{\ell,i})
d'_{N_\ell}(\Delta_{\ell,i})
\prod_{h\ne\ell}d_{N_h}(\Delta_{h,i}).
\end{aligned}
\end{equation}
The first term differentiates the amplitude; the second moves the
spectral footprint. Writing $\hat{\mathbf r}=\hat{\mathbf r}_i$ and
$R=R_i$, the projection in Eq.~\eqref{eq:method-mu} gives
\begin{equation}
\begin{aligned}
\nabla_{\mathbf p_i}\mu_{R,i}&=\frac{2KN_R}{cF_s}\hat{\mathbf r},\\
\nabla_{\mathbf p_i}\mu_{D,i}&=\frac{2N_DT_c}{\lambda R}
(I-\hat{\mathbf r}\hat{\mathbf r}^{\!\top})\mathbf v_i,\\
\nabla_{\mathbf v_i}\mu_{D,i}&=\frac{2N_DT_c}{\lambda}\hat{\mathbf r},\\
\nabla_{\mathbf p_i}\mu_{A,i}&=\frac{N_A\delta_{\mathrm{el}}}{\lambda R}
(I-\hat{\mathbf r}\hat{\mathbf r}^{\!\top})\mathbf u_x.
\end{aligned}
\end{equation}
The range and azimuth coordinates are independent of velocity.

\subsection{Windowed kernels}
\label{supp:dirichlet-windowed}

A taper changes the point response from $\kappa_N$ to
\begin{equation}
\kappa_N^w(\Delta)=\sum_{n=0}^{N-1}w[n]\e^{-\j 2\pi n\Delta/N}.
\end{equation}
This finite sum gives the response at fractional offsets for any
sampled window, including Hann, Hamming, and Kaiser windows.
The renderer uses the sensor's window; the main-lobe width and
sidelobe attenuation depend on that choice~\cite{harris1978use}.

\subsection{Separable spectral response}
\label{supp:dirichlet-separable}

\begin{proposition}[Separable Dirichlet structure]
\label{prop:supp-rda}
For a single point reflector that produces a separable complex
exponential of the form
$x[m,n,a] = \alpha\,\exp\!\big(\j 2\pi(\nu_D m + \nu_R n + \nu_A a)\big)$,
the 3D DFT (with optional separable windows
$w_D[m], w_R[n], w_A[a]$) factorizes as
\begin{equation}
Y[p,k,q]
=
\alpha\;
W_M\!\Big(\nu_D - \tfrac{p}{M}\Big)\;
W_N\!\Big(\nu_R - \tfrac{k}{N}\Big)\;
W_A\!\Big(\nu_A - \tfrac{q}{A}\Big),
\label{eq:supp-3d-factor}
\end{equation}
where each $W$ is a 1D window spectrum as in
Eq.~\eqref{eq:supp-window-spectrum}.
For rectangular windows, each factor is (up to phase) a Dirichlet
kernel, and the point response is therefore a separable product
of Dirichlet kernels across the three Fourier axes.
\end{proposition}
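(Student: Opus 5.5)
The plan is to write out the 3D DFT directly and use the fact that both the signal and the windows are products of one-variable factors, so the triple sum splits into three independent one-dimensional sums. I will take the windowed signal $\tilde x[m,n,a]=w_D[m]\,w_R[n]\,w_A[a]\,x[m,n,a]$ on the index box $\{0,\dots,M-1\}\times\{0,\dots,N-1\}\times\{0,\dots,A-1\}$. With the same sign convention as in the proof of Theorem~\ref{thm:supp-1d}, its DFT is
\[
Y[p,k,q]=\sum_{m=0}^{M-1}\sum_{n=0}^{N-1}\sum_{a=0}^{A-1}\tilde x[m,n,a]\,\e^{-\j2\pi(pm/M+kn/N+qa/A)}.
\]
When the windows are omitted, each of them is simply the constant $1$.

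The first step is to substitute the separable exponential. The summand then factors as $\alpha$ times three terms:
\[
w_D[m]\,\e^{\j2\pi(\nu_D-p/M)m},\qquad
w_R[n]\,\e^{\j2\pi(\nu_R-k/N)n},\qquad
w_A[a]\,\e^{\j2\pi(\nu_A-q/A)a}.
\]
Each term depends on only one summation index. Because the index set is a Cartesian product, the triple sum equals the product of the three single sums; this is just distributivity of finite sums.

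The second step is to recognize each single sum. By the definition of $W_N$ in Eq.~\eqref{eq:supp-window-spectrum}, applied with the appropriate window and length, the three sums are exactly $W_M(\nu_D-p/M)$, $W_N(\nu_R-k/N)$ and $W_A(\nu_A-q/A)$. This gives Eq.~\eqref{eq:supp-3d-factor}.

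For the rectangular-window claim, I invoke Theorem~\ref{thm:supp-1d} on each axis. Each factor becomes $\e^{\j\pi(L-1)\delta}\sin(\pi L\delta)/\sin(\pi\delta)$, where $L\in\{M,N,A\}$ is the length of that axis and $\delta$ is its offset. That is a Dirichlet kernel up to a unimodular phase. Equivalently, via Eq.~\eqref{eq:supp-kappa-bin}, each factor is $\kappa_L(\Delta_\ell)$ in bin units.

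No step is a real obstacle. The only points needing care are keeping the DFT sign convention consistent with Theorem~\ref{thm:supp-1d}, and handling integer offsets. At integer offsets the closed form has a removable singularity, but the sum form of $W$ remains valid, so the factorization holds everywhere before any closed form is used.
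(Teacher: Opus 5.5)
Your proposal is correct and takes the same approach as the paper. The paper's proof also notes that the signal and windows separate over $m$, $n$, $a$, so the windowed 3D DFT splits into three one-dimensional sums of the form \eqref{eq:supp-window-spectrum}, and the rectangular case then follows from Theorem~\ref{thm:supp-1d}; you simply state the distributivity step and the removable-singularity remark explicitly.
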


\begin{proof}
The signal and window separate over $m$, $n$, and $a$, so the
windowed DFT factorizes into three sums of the form
Eq.~\eqref{eq:supp-window-spectrum}.
\end{proof}

We evaluate the point response as a product of three 1D kernels.

\subsection{FMCW separable measurement model}
\label{supp:fmcw-chirp}

For FMCW acquisition~\cite{richards2014fundamentals},
a baseband chirp of duration $T$, slope $K=B/T$, and carrier $f_c$,
dechirped against the transmit reference under the stop-and-hop
approximation, returns a complex exponential at the beat frequency
$f_b = 2KR/c$ for a reflector at range $R$.
Across $M$ chirps with repetition $T_c$, target motion at radial
velocity $v_r$ induces a slow-time Doppler $f_D = 2v_r/\lambda$ with
$\lambda = c/f_c$.
Using the azimuth sign convention of Eq.~\eqref{eq:method-mu},
an $A$-element uniform linear array with spacing $d$ has a far-field steering
phase across antennas is $\exp(\j 2\pi a\,u(\theta))$ with
$u(\theta)=(d/\lambda)\sin\theta$.
The sampled dechirped signal across (slow-time, fast-time, array) is
therefore separable:
\begin{equation}
x[m,n,a]
=
\alpha\,
\exp\!\Big(\j 2\pi\big(f_D\,mT_c + f_b\,nT_s + a\,u(\theta)\big)\Big),
\label{eq:supp-samples-rd}
\end{equation}
which matches the canonical form of Proposition~\ref{prop:supp-rda}
with $\nu_D=f_D T_c$, $\nu_R=f_b T_s$, $\nu_A=u(\theta)$.
The Fourier-processing pipeline therefore reduces to evaluating one
windowed DFT per axis, whose closed form is the Dirichlet kernel of
Eqs.~\eqref{eq:supp-dirichlet}--\eqref{eq:supp-kappa-bin}.

\subsection{Validity of the separable model}
\label{supp:fmcw-validity}

Equation~\eqref{eq:supp-samples-rd} assumes stop-and-hop acquisition,
constant radial velocity, and negligible range walk under the far-field
array model above. We also give the interval in which velocity can be
recovered without aliasing.

\paragraph{(A1) Stop-and-hop}
The target is treated as stationary during a single chirp
(fast-time), with motion accumulating only between chirps
(slow-time).
This drops a within-chirp contribution
$\exp(\j 2\pi f_D\, n T_s)$ that is negligible whenever
\begin{equation}
|v_r| \;\ll\; \frac{c}{2 f_c\, N_R\, T_s}.
\label{eq:supp-stophop-bound}
\end{equation}
\paragraph{(A2) Constant radial velocity}
The slow-time phase is linear only for constant radial velocity
over the coherent processing interval $T_{\mathrm{CPI}}=N_DT_c$.
Radial acceleration $a_r$ adds a quadratic phase proportional to
$a_r(mT_c)^2/\lambda$, producing a Fresnel-type response.
This contribution is negligible when
\begin{equation}
|a_r|\, T_{\mathrm{CPI}}^2 \;\ll\; \lambda.
\label{eq:supp-accel-bound}
\end{equation}
Acceleration outside this regime requires a different Doppler response.

\paragraph{(A3) Negligible range walk}
A target with non-zero radial velocity traverses
$\Delta R_{\mathrm{walk}} = v_r\, T_{\mathrm{CPI}}$ during the CPI,
which couples range and Doppler once $\Delta R_{\mathrm{walk}}$
exceeds one range bin
$\Delta R_{\mathrm{bin}} = c/(2B)$ (sweep bandwidth $B$).
Separability of $\prod_\ell d_{N_\ell}(\cdot)$ in
Eq.~\eqref{eq:method-surfel} therefore requires
\begin{equation}
|v_r|\, T_{\mathrm{CPI}} \;\ll\; \frac{c}{2 B}.
\label{eq:supp-walk-bound}
\end{equation}
\paragraph{(A4) Doppler unambiguity}
The slow-time sampling imposes the Nyquist bound $|f_D| < 1/(2 T_c)$. Radial
velocities are therefore unambiguous only on
\begin{equation}
|v_r| \;<\; \frac{\lambda}{4 T_c}.
\label{eq:supp-amb-bound}
\end{equation}
Velocities separated by integer multiples of $\lambda/(2T_c)$
produce the same sampled Doppler response. This ambiguity affects
velocity recovery, not the separability of the forward response.

\begingroup\ignorespaces
\subsection{Fresnel response at a planar boundary}
\label{supp:fresnel}

\paragraph{Setup}
A surfel is a locally planar boundary between the surrounding medium,
of wave impedance $\eta_0 = \sqrt{\mu_0/\varepsilon_0}$, and the surfel
material, of wave impedance $\eta_i = \sqrt{\mu_i/\varepsilon_i}$.
A plane wave arrives at angle $\theta_{\mathrm{inc}}$ measured from the
surfel normal $\mathbf{n}_i$ and excites a transmitted wave at angle
$\theta_t$. Matching the tangential electric and magnetic fields across
the boundary gives the classical two-equation system whose solution is
the reflection coefficient below.

\paragraph{Transverse-electric coefficient}
When the incident electric field is perpendicular to the plane of
incidence, tangential-field matching yields
\begin{equation}
F_{\mathrm{TE}}
=
\frac{\eta_i\cos\theta_{\mathrm{inc}} - \eta_0\cos\theta_t}
     {\eta_i\cos\theta_{\mathrm{inc}} + \eta_0\cos\theta_t},
\label{eq:supp-fresnel-te}
\end{equation}
which is Eq.~\eqref{eq:method-fresnel}.
The transmitted angle follows from Snell's law,
$\varepsilon_0^{1/2}\sin\theta_{\mathrm{inc}}
 = \varepsilon_i^{1/2}\sin\theta_t$, which for non-magnetic media
($\mu_i = \mu_0$) simplifies to
$\sin\theta_t/\sin\theta_{\mathrm{inc}} = \eta_i/\eta_0$.

\paragraph{Transverse-magnetic coefficient}
When the incident magnetic field is perpendicular to the plane of
incidence, the same matching argument with the roles of the two fields
exchanged gives
\begin{equation}
F_{\mathrm{TM}}
=
\frac{\eta_i\cos\theta_t - \eta_0\cos\theta_{\mathrm{inc}}}
     {\eta_i\cos\theta_t + \eta_0\cos\theta_{\mathrm{inc}}}.
\label{eq:supp-fresnel-tm}
\end{equation}
Our sub-THz transceiver is linearly polarized and the targets are
oriented so that the transverse-electric branch dominates, so the
renderer evaluates Eq.~\eqref{eq:supp-fresnel-te}.

\paragraph{Two limits}
For a perfect conductor, $\abs{\eta_i}\to 0$ and
Eq.~\eqref{eq:supp-fresnel-te} gives $F\to-1$ at every incidence angle,
so the angular dependence collapses and $\alpha_i$ alone determines the
splat amplitude. In the dielectric case, where $\eta_i$ is comparable to
$\eta_0$, $\abs{F}$ varies over the aperture by a factor that grows with
the angular extent of the scan, and fitting $\alpha_i$ alone cannot
absorb that variation because $\alpha_i$ is constant across the aperture.

\paragraph{Identifiability}
The measurement constrains only the product $\alpha_i F_i$.
Separating the two parameters requires an aperture spanning enough
incidence angles to reveal the angular variation of $F$.
Under a narrow aperture the pair is degenerate up to a scale, and we
hold $\eta_i$ at its material prior. The reported sub-THz experiments use metal
targets in the $\abs{\eta_i}\to 0$ regime, so impedance is not estimated.
\par\endgroup\ignorespaces

\section{Implementation Details}
\label{supp:impl}
\label{app:optimizer}%
\label{supp:fmcw-device-params}%

\subsection{Sensor parameters}
\label{supp:impl-device}

The projection in Eq.~\eqref{eq:method-mu} uses sensor constants
read from the device datasheet or programmed waveform profile.
Table~\ref{tab:fmcw-device-params} maps these constants to the two
front-ends used in this paper. None are learned.

\begin{table*}[t]
\centering
\caption{\textbf{Mapping from device configuration to the constants
in Eq.~\eqref{eq:method-mu}.}
The mmWave column corresponds to the FMCW Doppler-validation experiment
of Appendix~\ref{app:doppler-validation}. The sub-THz column corresponds
to the THz application of \S\ref{sec:results-thz}.
}
\label{tab:fmcw-device-params}
\small
\setlength{\tabcolsep}{4pt}
\begin{tabular}{lll}
\toprule
\textbf{Constant} & \textbf{mmWave FMCW radar} & \textbf{Sub-THz SDR} \\
\midrule
$f_c$ (carrier frequency)       & $77$~GHz (waveform-profile $f_{\mathrm{start}}$) & $120$~GHz (LO synthesizer setpoint) \\
$\lambda = c/f_c$               & $3.90$~mm                                       & $2.50$~mm \\
$B$ (sweep bandwidth)           & $250$~MHz (frequency-slope $\times$ $T$)        & $1.23$~GHz (programmed sweep span) \\
$T$ (chirp duration)            & user-set fast-time window                       & $8192\,T_s$ (acquisition window) \\
$K = B/T$ (chirp slope)         & derived; sets range-bin scale via $2K/c$        & derived from $B$ and $T$ \\
$T_c$ (chirp repetition interval) & $60~\mu$s (chirp-cycle-time register)         & N/A (single-shot acquisition) \\
$F_s$ (ADC sampling rate)       & ADC sampling-rate register                      & SDR-card sample-rate setting \\
$N_R$                           & ADC samples per chirp                           & $8192$ (fast-time samples) \\
$N_D$                           & chirps per frame                                & captures per frame \\
$N_A$                           & $N_{\mathrm{tx}} N_{\mathrm{rx}}$ (virtual ULA) & raster-scan grid size \\
$\delta_{\mathrm{el}}$ (array element spacing) & $\lambda/2$ (cascade EVM)        & raster step (mechanical) \\
\bottomrule
\end{tabular}
\end{table*}

\paragraph{Bin-scale consistency}
Substituting the FMCW identities
\[
F_s=KT\cdot(N_R/N_R\,\text{samples per chirp}),\qquad
\Delta R_{\mathrm{bin}}=c/(2B)
\]
reduces $\partial\mu_R/\partial R=(2K/F_s)(N_R/c)$ to
$N_R/\Delta R_{\mathrm{bin}}$, giving a unit bin shift for one
range-bin displacement. The Doppler scale $2N_DT_c/\lambda$ maps
the velocity-bin width $\lambda/(2T_cN_D)$ to one bin; the azimuth
scale $N_A\delta_{\mathrm{el}}/\lambda$ does the same on the spatial
axis. Porting the renderer therefore requires only the device
constants $(2K/cF_s,2T_c/\lambda,\delta_{\mathrm{el}}/\lambda)$.
The remaining forward model is shared across modalities.

\subsection{Renderer truncation and noise floor}
\label{supp:impl-renderer}

\paragraph{Truncation}
We truncate each splat's contribution to a radius $r=8$ bins per
axis, which reduces the per-step cost from
$O(S\,\prod_{\ell=1}^{L} N_\ell)$ to $O(S\,r^L)$ and makes the
renderer scale linearly in surfel count rather than in
spectral-tensor volume.
For the rectangular window with $N\in[64,256]$, $r=8$ captures
$>99\%$ of the kernel energy. Tapered windows (Hann, Hamming) are
even more concentrated, and the same radius remains safe.

\paragraph{Noise floor}
We add a learned or calibrated Gaussian noise floor $\sigma^2$ to
the power image: $\hat H=|\hat Y|^2+\sigma^2$.
Section~\ref{sec:opt} describes the loss design, including log-power
supervision for the 40--60~dB dynamic range of coherent measurements.

\paragraph{Matrix-free accumulation}
Materializing $A(C)\in\mathbb{C}^{|\Omega|\times S}$ for
Eq.~\eqref{eq:opt-bstar} exceeds $1$~GiB in single-precision complex
for our 3D settings. We instead stream voxel chunks and accumulate
$G=A^{\!\mathsf H}A$ and $q=A^{\!\mathsf H}\mathbf y$.
Since $G$ is Hermitian, we store its upper triangle, reducing the
working set to $O(\mathrm{chunk}\cdot S)$.
A fused CUDA kernel combines footprint evaluation, conjugate
multiplication, and reduction without intermediate writes.
The streamed path is numerically equivalent to the dense reference
and reduces peak memory and per-step time by over an order of
magnitude (Appendix~\ref{app:runtime}).

\subsection{Training stabilizers and numerical safeguards}
\label{supp:impl-safeguards}

\paragraph{Periodic coupled correction}
To account for the off-diagonal coherent coupling in \S\ref{sec:opt},
every $T_{\mathrm{LM}}\!=\!40$ outer steps we therefore take a global
Gauss--Newton step on a downsampled spectral target, using the
VarPro-reduced center Jacobian
$J_{\mathrm{red}} = (I - P_A) J_c$~\cite{golub1973variable} with a
trust-radius clip and a full-resolution acceptance check.
The periodic correction implements the low-resolution
acceptance sub-step of Algorithm~\ref{alg:dsfw}.

\paragraph{Window matching}
The analytic kernel $\kappa^w_N$ uses the same taper as the sensor,
so that the rendered point response matches the deployed measurement
chain to within numerical precision.

\paragraph{Kaiser-to-target kernel-continuation curriculum}
Training begins with a smooth Kaiser kernel and anneals the shape
parameter toward the deployed taper, mirroring the
multi-resolution spectral-estimation practice of windowed-FFT
super-resolution~\cite{harris1978use}.
This restores sidelobe contrast while avoiding early sidelobe peaks.

\paragraph{Pairwise repulsion penalty}
A short-range inter-splat penalty prevents centers from collapsing
under coherent cancellation.

\paragraph{Coarse-to-fine schedule on kernel order $N$}
Short windows are used early (broad mainlobes, weak sidelobes) and
the full $N$ near convergence, so that initial certificate peaks are
well separated and final accuracy is set by the deployed kernel.

\paragraph{Kernel evaluation near integer offsets}
The derivative formula
$\partial \kappa_N/\partial \Delta$ of
Appendix~\ref{supp:gradients-kernel} contains two cotangent terms
that individually diverge at integer bin offsets but whose
combination remains finite.
Near $\Delta\approx 0$,
\[
\lim_{\Delta\to 0}\left[\pi\cot(\pi\Delta)-\frac{\pi}{N}\cot\!\Big(\frac{\pi}{N}\Delta\Big)\right]
= 0,
\]
We switch to a Taylor branch when
$|\Delta|<\epsilon$, with $\epsilon=10^{-4}$ throughout:
\[
\begin{aligned}
\pi\cot(\pi\Delta)&\approx\tfrac1\Delta-\tfrac{\pi^2\Delta}{3}+O(\Delta^3),\\
\tfrac\pi N\cot\!\Big(\tfrac{\pi\Delta}{N}\Big)
&\approx\tfrac1\Delta-\tfrac{\pi^2\Delta}{3N^2}+O(\Delta^3).
\end{aligned}
\]
The difference $-\tfrac{\pi^2\Delta}{3}(1-1/N^2) + O(\Delta^3)$ is
well-behaved at the origin.
The same $\epsilon$-regularization is applied near every $\Delta$
that is a non-zero multiple of $N$, where $\sin(\pi\Delta/N)=0$.
This branch switch is the only floating-point safeguard in the
forward and backward kernel evaluations.

\subsection{Hyperparameters and Reproducibility}
\label{supp:impl-hyperparams}

Table~\ref{tab:supp-hyperparams} lists the experiment settings,
fixed across seeds. Only the seed and initial centers $C_0$ vary.

\begin{table}[tbp]
\centering
\small
\caption{\textbf{DSFW hyperparameters per experiment.}
$T$: outer iterations; $\lambda$: VarPro ridge weight;
$r$: truncation radius in bins; $\mathcal{T}_{\mathrm{lr}}$:
low-resolution sub-step interval; $\beta_0$: initial Kaiser shape.}
\label{tab:supp-hyperparams}
\setlength{\tabcolsep}{4pt}
\begin{tabular}{lccccc}
\toprule
Experiment & $T$ & $\lambda$ & $r$ & $\mathcal{T}_{\mathrm{lr}}$ & $\beta_0$ \\
\midrule
2D synthetic ($128{\times}128$, $S=50$) & $200$ & $10^{-4}$ & $8$ & every $20$ & $14$ \\
3D synthetic ($64{\times}256{\times}256$) & $300$ & $10^{-4}$ & $4$ & every $25$ & $14$ \\
THz planar target                         & $400$ & $10^{-3}$ & $8$ & every $20$ & $10$ \\
mmWave automotive cube                    & $200$ & $10^{-4}$ & $4$ & every $25$ & $14$ \\
\bottomrule
\end{tabular}
\end{table}

\begingroup\ignorespaces
\paragraph{Baseline optimizer settings}
The first-order controls run for $300$ steps under the paper preset.
Dirichlet-AdamW and SGD
use center/coefficient learning rates $8\times10^{-3}$/$2.5\times10^{-2}$,
with AdamW weight decay $10^{-4}$ and SGD momentum $0.9$.
The 3DGS-style, GSRF, and WRF-GS+ baselines use AdamW with learning
rate $0.03$ and weight decay $10^{-4}$.
The quasi-Newton control follows $300$ AdamW steps with $50$ L-BFGS
iterations (learning rate $0.8$, history $20$, strong-Wolfe line search).
AdamW uses the PyTorch default moment coefficients $(0.9,0.999)$.
\par\endgroup\ignorespaces

All experiments run on a single NVIDIA RTX A6000 with PyTorch
$2.1$ and CUDA $12.1$.
Metrics average $10$ seeds, controlling the uniform initial centers
$C_0$ and tie-breaking among low-utility candidates.

\begingroup\ignorespaces
\paragraph{Candidate volume and initialization}
Table~\ref{tab:supp-init} lists the candidate grids and splat budgets.
The candidate grid spans the sensor's range gate and azimuth field
of view. Each comparison shares this grid and budget $S$, with
cold-start centers drawn uniformly without replacement.
The multi-view fit of \S\ref{sec:results-bunny} and the printed-object fit of \S\ref{sec:results-thz} seed their centers from back-projection peaks instead.

\begin{table}[tbp]
\centering
\small
\caption{\textbf{Candidate grid, range gate, and splat budget.}
Each comparison shares budget $S$. The raw-signal budget is capped
by both the listed cap and the scene reflector count.
The sub-THz candidate pitch is $0.75\times$ the beam sigma,
upsampled $2\times$.}
\label{tab:supp-init}
\setlength{\tabcolsep}{4pt}
\begin{tabular}{lccc}
\toprule
Experiment & Candidate grid & Range gate & $S$ \\
\midrule
Signal-space fitting (\S\ref{sec:results-inverse}) & $128{\times}128$   & full axis & $50$ \\
Basin sweep (\S\ref{sec:results-inverse})          & $128{\times}128$   & full axis & $5$ \\
Raw-signal imaging (\S\ref{sec:results-3d})   & $56{\times}56$     & full axis & ${\leq}192$ \\
Sub-THz letters (\S\ref{sec:results-thz})      & beam-derived pitch & ROI gate  & $512$ \\
\bottomrule
\end{tabular}
\end{table}
\par\endgroup\ignorespaces

\section{Additional experiments}
\label{app:extra}
\setlength{\dblfloatsep}{4pt plus 1pt minus 1pt}
\setlength{\dbltextfloatsep}{6pt plus 1pt minus 1pt}

\begin{figure*}[!t]
\begin{minipage}[t]{0.48\textwidth}
\vspace{0pt}
\centering
\includegraphics[width=\linewidth]{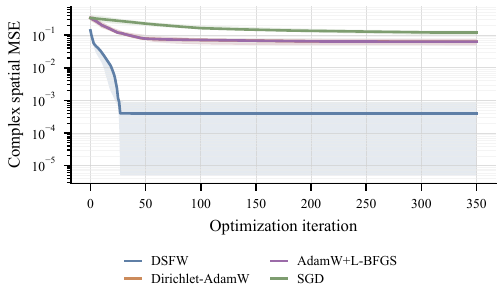}
\caption{\textbf{Cross-scene convergence trajectories.}
Median complex spatial MSE with interquartile shading over the seven
signal-space scenes in Fig.~\ref{fig:inverse-result-grid}.}
\Description{Median loss versus iteration for four optimizers, with interquartile bands over seven scenes.}
\label{fig:supp-convergence-suite}
\end{minipage}
\hfill
\begin{minipage}[t]{0.48\textwidth}
\vspace{0pt}
\centering
\includegraphics[width=\linewidth]{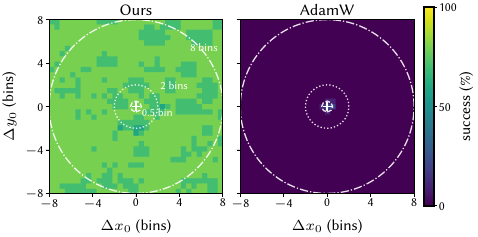}
\caption{%
  \textbf{Basin of attraction.}
  Two-dimensional cold-start success over initial position offset on
  $K=5$ reflector scenes, averaged over ten random seeds.
  The certificate update gives DSFW a broad attraction basin across the
  tested eight-bin range ($77.4\%$ aggregate success), while
  Dirichlet-AdamW succeeds on only $0.07\%$ of trials.}
\Description{Two heatmaps of cold-start recovery success versus horizontal and vertical initial offset, comparing DSFW and Dirichlet-AdamW.}
\label{fig:basin}
\end{minipage}
\end{figure*}

\subsection{Doppler validation}
\label{app:doppler-validation}

\begingroup
\emergencystretch=1em
We validate the Doppler parameter under the constant-\allowbreak radial-\allowbreak velocity,
no-range-walk assumptions of Appendix~\ref{supp:fmcw-validity}.
Acceleration, tracking, occlusion changes, and dynamic topology
are outside this validation's scope.
\par\endgroup

\paragraph{Protocol}
We synthesize a three-reflector range--Doppler--azimuth cube with
$N_R=96$, $N_D=64$, and $N_A=8$.
The FMCW constants are carrier frequency $f_c=77$\,GHz, chirp
interval $T_c=60$\,\textmu s, and bandwidth $B=250$\,MHz, giving a
velocity-bin spacing of $0.507$\,m/s.
The three targets have radial velocities $-6.0$, $0.0$, and
$7.5$\,m/s, with maximum range walk of $0.051$ range bins across the
coherent processing interval.
We compare three rows:
\textit{(i)} the finite-window FFT synthesis oracle;
\textit{(ii)} a joint Dirichlet position-and-velocity model that searches the
radial velocity coordinate on a $0.05$\,m/s grid over $\pm 8.0$\,m/s,
$321$ candidates, and refits complex amplitudes by least squares; and
\textit{(iii)} a static position-only Dirichlet control with the same
range and azimuth support but Doppler center fixed to zero.

\begin{table}[tbp]
\centering
\caption{\textbf{Doppler validation metrics.}
The joint velocity model recovers the moving-reflector cube, while
the static control cannot explain Doppler-shifted energy.}
\label{tab:doppler-validation}
\small
\setlength{\tabcolsep}{3pt}
\begin{tabular}{lcc}
\toprule
\textbf{Method} & Velocity RMSE (m/s)$\downarrow$ & Complex NMSE (dB)$\downarrow$ \\
\midrule
FFT reference & $0.000$ & $-120.00$ \\
Joint Dirichlet velocity & $0.000$ & $-120.00$ \\
Static position-only & $5.545$ & $-1.11$ \\
\bottomrule
\end{tabular}
\end{table}

The joint velocity model matches the FFT oracle to the numerical
floor, while clamping velocity to zero on the same scene degrades
the fit by more than two orders of magnitude in complex NMSE.
This isolates the contribution of the Doppler parameter itself,
holding range and angle fixed.

\paragraph{Randomized sweep}
We repeat the Doppler validation on eight randomized three-reflector
scenes with off-bin moving-target velocities
($1.25{\leq}|v_r|{\leq}8.0$~m/s) and complex Gaussian noise at
40, 30, 20, and 10~dB SNR. We use the same range and azimuth support
with a coarser 0.25~m/s velocity-search step. The joint model obtains
0.074~m/s mean velocity RMSE at all four SNRs. Its mean complex NMSE
changes from $-11.97$~dB at 40~dB SNR to $-8.14$~dB at 10~dB SNR.
The static control remains near $0$~dB, giving static-minus-joint NMSE
gaps of $11.97$~dB and $8.13$~dB at those two endpoints. This experiment
tests the Doppler parameter across randomized velocities and noise
levels. It does not evaluate dynamic-scene reconstruction.

\begin{figure*}[!t]
\begin{minipage}[t]{0.48\textwidth}
\vspace{0pt}
\centering
\includegraphics[width=\linewidth]{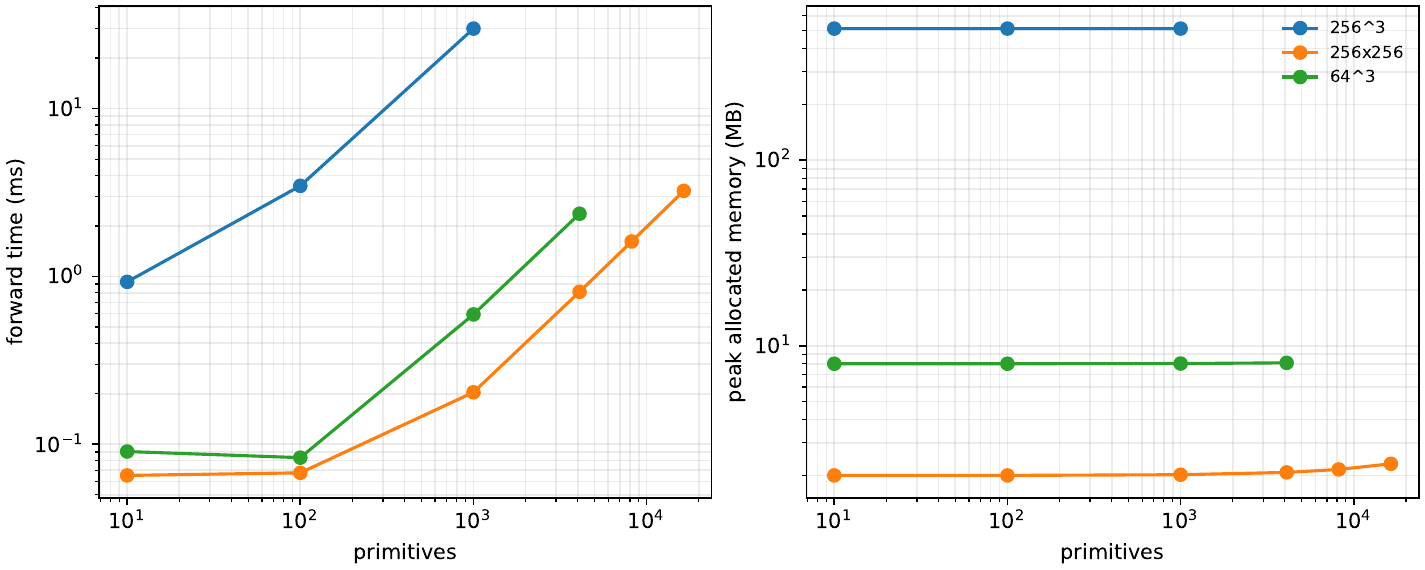}
\caption{\textbf{Forward render runtime and memory scaling.}
Forward render time scales with primitive count, while peak allocation
is dominated by the output grid size for these fused-renderer settings.}
\Description{Two log-log plots showing forward render time and peak allocated memory versus primitive count for 2D and 3D grids.}
\label{fig:runtime-slang-scaling}
\end{minipage}
\hfill
\begin{minipage}[t]{0.48\textwidth}
\vspace{0pt}
\centering
\captionof{table}{\textbf{Forward runtime and memory.}
Steady-state fused Dirichlet forward render time and peak GPU
allocation.}
\label{tab:runtime-slang}
\small
\setlength{\tabcolsep}{4pt}
\begin{tabular}{lrrr}
\toprule
\textbf{Grid} & \textbf{Primitives} & \textbf{Time (ms)} & \textbf{Peak MB} \\
\midrule
$256{\times}256$ & 10 & 0.065 & 2.0 \\
$256{\times}256$ & 100 & 0.068 & 2.0 \\
$256{\times}256$ & 1000 & 0.204 & 2.0 \\
$256{\times}256$ & 4096 & 0.810 & 2.1 \\
$256{\times}256$ & 8192 & 1.615 & 2.2 \\
$256{\times}256$ & 16384 & 3.236 & 2.3 \\
$64^3$ & 10 & 0.090 & 8.0 \\
$64^3$ & 100 & 0.083 & 8.0 \\
$64^3$ & 1000 & 0.594 & 8.0 \\
$64^3$ & 4096 & 2.363 & 8.1 \\
$256^3$ & 10 & 0.928 & 512.0 \\
$256^3$ & 100 & 3.466 & 512.0 \\
$256^3$ & 1000 & 29.984 & 512.0 \\
\bottomrule
\end{tabular}
\end{minipage}
\end{figure*}

\subsection{Cross-scene quantitative summaries}
\label{app:cross-scene-statistics}

Tables report mean $\pm$ standard deviation across the displayed
scenes in the signal-space fitting and raw-signal imaging experiments.
PSNR and SSIM are computed on the displayed reconstruction panels.

\begin{table*}[t]
\centering
\caption{\textbf{Cross-scene inverse signal-space statistics.}
Mean $\pm$ standard deviation over the seven signal-space scenes in Fig.~\ref{fig:inverse-result-grid}.
PSNR and SSIM are computed on the displayed magnitude fields against the reference target.}
\label{tab:supp-inverse-signal-stats}
\small
\setlength{\tabcolsep}{3.5pt}
\begin{tabular}{lcccc}
\toprule
\textbf{Method} & \textbf{Complex MSE}$\downarrow$ & \textbf{Mag. NMSE (dB)}$\downarrow$ & \textbf{PSNR}$\uparrow$ & \textbf{SSIM}$\uparrow$ \\
\midrule
3DGS & $1.92{\times}10^{-1} \pm 1.18{\times}10^{-1}$ & $-3.09 \pm 0.59$ & $17.44 \pm 4.94$ & $0.396 \pm 0.154$ \\
GSRF & $4.43{\times}10^{-2} \pm 3.02{\times}10^{-2}$ & $-10.90 \pm 3.75$ & $25.26 \pm 3.71$ & $0.654 \pm 0.135$ \\
WRF-GS+ & $4.38{\times}10^{-2} \pm 3.04{\times}10^{-2}$ & $-11.01 \pm 3.74$ & $25.37 \pm 3.68$ & $0.650 \pm 0.126$ \\
SGD & $1.12{\times}10^{-1} \pm 5.57{\times}10^{-2}$ & $-3.98 \pm 2.51$ & $18.34 \pm 3.12$ & $0.542 \pm 0.108$ \\
Dirichlet-AdamW & $6.06{\times}10^{-2} \pm 3.32{\times}10^{-2}$ & $-6.62 \pm 3.40$ & $20.98 \pm 3.36$ & $0.712 \pm 0.128$ \\
AdamW+L-BFGS & $5.97{\times}10^{-2} \pm 3.27{\times}10^{-2}$ & $-6.69 \pm 3.44$ & $21.04 \pm 3.35$ & $0.717 \pm 0.131$ \\
Ours & {\boldmath\textbf{$6.71{\times}10^{-4} \pm 9.06{\times}10^{-4}$}} & {\boldmath\textbf{$-38.54 \pm 18.53$}} & {\boldmath\textbf{$52.89 \pm 18.69$}} & {\boldmath\textbf{$0.998 \pm 0.003$}} \\
\bottomrule
\end{tabular}
\end{table*}

\paragraph{Convergence suite}
To complement the representative loss trace in the main paper,
Fig.~\ref{fig:supp-convergence-suite} aggregates the recorded histories
over the seven scenes in Fig.~\ref{fig:inverse-result-grid}.

\begin{table*}[t]
\centering
\caption{\textbf{Cross-scene convergence statistics.}
Mean $\pm$ standard deviation over the seven signal-space scenes in Fig.~\ref{fig:inverse-result-grid}.
$\rho$ is the empirical exponential loss-decay rate fitted to each recorded trace.
The threshold column reports the mean iteration among scenes that reach complex MSE $10^{-3}$, followed by the reach count.}
\label{tab:supp-convergence-suite}
\small
\setlength{\tabcolsep}{3.5pt}
\begin{tabular}{lccc}
\toprule
\textbf{Method} & \textbf{$\rho$}$\uparrow$ & \textbf{Iter. to $10^{-3}$}$\downarrow$ & \textbf{Final MSE}$\downarrow$ \\
\midrule
DSFW & {\boldmath\textbf{$0.790 \pm 1.525$}} & $18.6 \pm 11.1$ (5/7) & {\boldmath\textbf{$6.71{\times}10^{-4} \pm 9.06{\times}10^{-4}$}} \\
Dirichlet-AdamW & $0.064 \pm 0.010$ & -- (0/7) & $6.06{\times}10^{-2} \pm 3.32{\times}10^{-2}$ \\
AdamW+L-BFGS & $0.063 \pm 0.010$ & -- (0/7) & $5.97{\times}10^{-2} \pm 3.27{\times}10^{-2}$ \\
SGD & $0.012 \pm 0.005$ & -- (0/7) & $1.12{\times}10^{-1} \pm 5.57{\times}10^{-2}$ \\
\bottomrule
\end{tabular}
\end{table*}

\paragraph{Basin of attraction}
Fig.~\ref{fig:basin} visualizes the two-dimensional cold-start
success map summarized in \S\ref{sec:results-inverse}.

\begin{table*}[t]
\centering
\caption{\textbf{Cross-scene raw-signal inverse-imaging statistics.}
Mean $\pm$ standard deviation over the seven displayed planar targets in Fig.~\ref{fig:known-plane-reconstruction}.
PSNR and SSIM are computed on object-plane reconstruction panels against the reference support.}
\label{tab:supp-plane-reconstruction-stats}
\small
\setlength{\tabcolsep}{3.5pt}
\begin{tabular}{lccccc}
\toprule
\textbf{Method} & \textbf{Support NMSE}$\downarrow$ & \textbf{PSNR}$\uparrow$ & \textbf{SSIM}$\uparrow$ & \textbf{IoU}$\uparrow$ & \textbf{Dice}$\uparrow$ \\
\midrule
Back-projection & $0.387 \pm 0.252$ & $13.25 \pm 2.02$ & $0.601 \pm 0.106$ & $0.612 \pm 0.211$ & $0.740 \pm 0.176$ \\
3DGS & $0.389 \pm 0.266$ & $13.23 \pm 1.69$ & $0.392 \pm 0.088$ & $0.628 \pm 0.207$ & $0.753 \pm 0.171$ \\
GSRF & $0.406 \pm 0.248$ & $12.98 \pm 1.82$ & $0.554 \pm 0.045$ & $0.590 \pm 0.204$ & $0.723 \pm 0.171$ \\
WRF-GS+ & $0.391 \pm 0.250$ & $13.19 \pm 1.94$ & $0.602 \pm 0.048$ & $0.610 \pm 0.208$ & $0.739 \pm 0.173$ \\
GS+SFW & $0.294 \pm 0.306$ & $15.38 \pm 3.00$ & $0.683 \pm 0.130$ & $0.740 \pm 0.269$ & $0.823 \pm 0.208$ \\
Ours & {\boldmath\textbf{$0.136 \pm 0.109$}} & {\boldmath\textbf{$19.41 \pm 5.51$}} & {\boldmath\textbf{$0.810 \pm 0.061$}} & {\boldmath\textbf{$0.801 \pm 0.150$}} & {\boldmath\textbf{$0.883 \pm 0.095$}} \\
\bottomrule
\end{tabular}
\end{table*}

\subsection{Per-component ablations}
\label{app:per-component-ablations}

\S\ref{sec:results-ablations} reports the $2{\times}2$ kernel--optimizer
factorial.
Tab.~\ref{tab:e4-factorial-rerun} reports a rerun under the paper
preset ($10$ seeds, $128{\times}128$ grid, $5$ splats), which adds the
center RMSE column.

\begin{table}[H]
\centering
\caption{\textbf{Kernel-and-optimizer factorial rerun.}
The Dirichlet kernel and SFW/VarPro optimizer are both required for
reliable cold-start recovery under the E4 paper preset.}
\label{tab:e4-factorial-rerun}
\small
\setlength{\tabcolsep}{4pt}
\begin{tabular}{lrrr}
\toprule
\textbf{Method} & \textbf{Success (\%)} & \textbf{MSE}$\downarrow$ & \shortstack{Center RMSE\\(bins)$\downarrow$} \\
\midrule
Gaussian + AdamW & 0.0 & $8.54{\times}10^{-3}$ & 36.310 \\
Gaussian + SFW & 10.0 & $6.71{\times}10^{-3}$ & 0.187 \\
Dirichlet + AdamW & 0.0 & $1.18{\times}10^{-2}$ & 41.630 \\
Dirichlet + DSFW & 100.0 & $5.32{\times}10^{-9}$ & 0.001 \\
\bottomrule
\end{tabular}
\end{table}

At MSE $1.18{\times}10^{-2}$ Dirichlet$+$AdamW trails Gaussian$+$AdamW
($8.54{\times}10^{-3}$), because the Dirichlet kernel's sub-bin phase
and sidelobe structure produce a more rugged landscape than the
Gaussian's monotonic decay.
The kernel and solver are therefore co-designed.
The $10^6{\times}$ MSE gap between Dirichlet$+$AdamW and
Dirichlet$+$DSFW measures the contribution of the solver given a
fixed Dirichlet kernel.

\subsection{Runtime and memory}
\label{app:runtime}

\paragraph{Protocol}
We measure wall-clock time and peak GPU allocation for the fused
forward renderer, varying the number of primitives $S$ on
standard 2D and 3D signal grids.

\paragraph{Backward pass}
At $16{,}384$ primitives on an RTX 5080, the Dirichlet primitive costs
$1.2$ to $1.6\times$ a Gaussian in the forward pass ($3.2$ versus
$2.1$~ms) and $1.08\times$ in the backward pass ($25.1$ versus
$23.2$~ms). A DSFW step takes $50$ to $180$~ms, compared with $1$ to
$9$~ms for AdamW. However, DSFW reaches within $10$ to $30$ steps a loss
that AdamW does not reach within $2000$ steps. The mean total
reconstruction time for \S\ref{sec:results-3d} is $16.4$~s.

The $256^3$ memory values are dominated by the complex output volume.
The small increase with primitive count comes from the primitive
parameter buffers and temporary shader state.

\end{document}